\documentclass[a4paper,11pt,onecolumn,unpublished]{quantumarticle}
\pdfoutput=1
\usepackage[T1]{fontenc}
\usepackage[utf8]{inputenc}
\usepackage{amsmath,amssymb,amsthm,mathtools}
\usepackage{booktabs,enumitem,microtype,tikz}
\usepackage{hyperref}
\hypersetup{colorlinks=true,linkcolor=blue,citecolor=blue,urlcolor=blue}
\newtheorem{theorem}{Theorem}[section]
\newtheorem{proposition}[theorem]{Proposition}
\newtheorem{lemma}[theorem]{Lemma}
\newtheorem{corollary}[theorem]{Corollary}
\newtheorem{conjecture}[theorem]{Conjecture}
\theoremstyle{definition}
\newtheorem{definition}[theorem]{Definition}
\newtheorem{remark}[theorem]{Remark}
\newcommand{\Tr}{\operatorname{Tr}}

\newcommand{\id}{\mathbb I}
\newcommand{\R}{\mathbb R}
\newcommand{\1}{\mathbf 1}
\newcommand{\Coeff}{\mathcal C}
\newcommand{\Creg}{\mathfrak C}
\newcommand{\Qreg}{\mathfrak Q}
\newcommand{\Greg}{\mathfrak G}
\newcommand{\Sopt}{\mathsf S}
\newcommand{\btr}{\beta_{\mathrm{tr}}}
\newcommand{\doi}[1]{\href{https://doi.org/#1}{\nolinkurl{#1}}}
\setlist[enumerate]{itemsep=3pt,topsep=4pt}
\title{Classical, quantum, and general probabilistic state-discrimination profiles}
\author{Mih\'aly Weiner}
\affiliation{Institute of Mathematics,
	Budapest University of Technology and Economics, Műegyetem rkp.\ 3.,
	H-1111 Budapest, Hungary}

\begin{document}
\begingroup
\makeatletter
\patchcmd{\@titleatfontsize}{\@title}{\@title\par}{}%
  {\ClassWarning{quantumarticle}{Title paragraph adjustment not applied}}%
\makeatother
\exhyphenpenalty=10000 
\maketitle
\endgroup
\begin{abstract}
For a collection of states indexed by $[n]=\{1,\ldots,n\}$, let $F(H)$ denote
the minimum unnormalized discrimination error for the subfamily indexed by
each nonempty $H\subseteq[n]$. We call $F$ the \emph{discrimination profile} of
the collection and characterize which profiles can arise in a general
probabilistic theory (GPT). For fixed $n$, the set $\Greg_n$ of all GPT profiles
turns out to be a bounded rational polytope, giving a discrimination-theoretic
hierarchy
\[
 \Creg_n\subseteq\Qreg_n\subseteq\Greg_n,
\]
where $\Creg_n$ is the classical profile polytope (which we also characterize)
and $\Qreg_n$ is the convex set of quantum profiles. The hierarchy mirrors the
local--quantum--no-signalling hierarchy in Bell theory.

For $n=3$ -- the smallest case exhibiting a separation -- we give an explicit
description of the classical and GPT profile polytopes. Within $\Greg_3$, the
classical region is cut out by the single inequality
\[
 \beta:=F(123)-F(12)-F(13)-F(23)\le\beta_{\mathrm C}\equiv0,
\]
whose GPT maximum is $\beta_{\mathrm{GPT}}=1$. For qubit triples, the maximal
$\beta$, attained at the trine, is $\btr=3\sqrt3/2-2$, already violating the
classical bound; we further give an elementary proof that the
dimension-unrestricted quantum supremum satisfies $\beta_{\mathrm Q}\le2/3$.
Consequently, the single functional $\beta$ already separates the three profile
regions strictly:
\[
 0=\beta_{\mathrm C}
 <\frac{3\sqrt3}{2}-2
 \le\beta_{\mathrm Q}
 \le\frac23
 <1=\beta_{\mathrm{GPT}}.
\]

The bound $\beta_{\mathrm Q}\le2/3$ is not tight; we give a further argument
yielding a slight improvement, which however likely remains suboptimal. We
conjecture $\beta_{\mathrm Q}=\btr=3\sqrt3/2-2$, and support this by showing that
the trine is optimal among pure-state triples in arbitrary dimension, together
with further evidence.

For arbitrary $n$, we prove that every classical facet admitting a GPT
violation already admits a quantum violation in dimension at most three. A
four-state qubit example reveals a failure of submodularity, a feature absent
for $n=3$. Finally, as an outlook, we consider the prior-weighted
discrimination problem and exhibit a four-dimensional quantum triple whose
discrete profile $F$ is classical under uniform priors on all subfamilies, but
whose prior-weighted profile is nonclassical.
\end{abstract}

\section{Introduction}\label{sec:intro}

Minimum-error state discrimination usually asks a single question: given one
fixed ensemble, what is the best achievable success (or error) probability?
Here we ask it simultaneously for every subensemble. Given a fixed list of
states $s_1,\ldots,s_n$ of a classical, quantum, or an even more general
hypothetical system described by a general probabilistic theory (GPT) model,
for each nonempty $H\subseteq[n]$ let $F(H)$ denote the minimum unnormalized sum
of conditional error probabilities in guessing $i$ from a measurement, given
the promise that the system was prepared in state $s_i$ for some $i\in H$. We
call the resulting set function $F$ the \emph{discrimination profile} of the
list of states.

In \cite{Weiner2025}, where the present author studied possible patterns of
\emph{perfect} distinguishability, the more quantitative question of
characterizing \emph{all} properties of discrimination profiles was raised. Of
course, several constraints are immediate; as already noted there,
$0\le F(H)\le |H|-1$ for any nonempty $H$, and adjoining one more state
increases the sum of error probabilities by a quantity between zero and one:
$F(H)\le F(H\cup\{j\})\le F(H)+1$.

Less immediate, but equally model-independent, is the inequality
\begin{equation}\label{eq:intro-gluing}
 F(A\cup B)\ge F(A)+F(B)\qquad\text{if }|A\cap B|=1.
\end{equation}
It follows by reusing a measurement for $A\cup B$: outcomes outside each
subproblem are reassigned to its common hypothesis
(Proposition~\ref{prop:one-point} gives the details). Iterating this along a
tree gives
\begin{equation}\label{eq:intro-tree}
 F(H)\ge\max_{\mathcal T}\sum_{\{i,j\}\in E(\mathcal T)}F(ij),
\end{equation}
the maximum taken over spanning trees on $H$. This is the discrimination version of Hunter's classical union bound \cite{Hunter1976}, and does not seem to be widely used in the quantum state-discrimination literature. In particular, it strengthens the pairwise lower bound of Qiu \cite{Qiu2008}, which is also recalled, for example, in the introduction of Audenaert and Mosonyi \cite{AudenaertMosonyi2014}; see Section~\ref{sec:universal} for the precise comparison.

This raises a natural question: which restrictions on discrimination profiles
follow purely from the universal operational rules governing states,
measurements, and classical decision-making, and which require specifically
classical or quantum structure? ``Universal'' here does not mean weak or
obvious---as the example above shows, even elementary measurement manipulations
can yield useful bounds for quantum hypothesis testing.

Our guiding template is the local--quantum--no-signalling hierarchy of Bell
theory \cite{BrunnerEtAl2014}; here the analogous objects are the classical,
quantum, and GPT profile regions,
\begin{equation}\label{eq:hierarchy}
 \Creg_n\subseteq\Qreg_n\subseteq\Greg_n.
\end{equation}
The outer region $\Greg_n$ describes everything consistent with the general
operational framework; classical facets serve as witnesses of nonclassical
behavior; and quantum bounds measure how far quantum theory can push against
them. Ours is a single-system discrimination problem rather than a multipartite
correlation experiment, but the hierarchy plays the same organizing role.

Our main structural result is an exact description of the outer region
$\Greg_n$. A single fixed measurement yields a classical profile,
representable as a normalized coverage function; optimizing over all
measurements yields its upper envelope in success coordinates. The essential
converse is that \emph{every} finite upper envelope of this form is realizable:
a suitable Cartesian product of classical state spaces allows one to choose
among the corresponding classical experiments, and no affine measurement on
the product can outperform the best coordinate measurement for the subfamily
in question. Together these give both a realization theorem and a finite
linear-programming test, and show that $\Greg_n$, like $\Creg_n$, is a rational
polytope. The coverage-function representation itself belongs to established
theory \cite{StrokorbSchlather2015}; our contribution is to identify exactly how
it is reshaped by optimization over GPT measurements.

The three-state case already separates all three levels. We determine the
vertices and facets of both polytopes and find that a single additional
inequality cuts the classical region out of the GPT region:
\begin{equation}\label{eq:intro-beta}
 \beta:=F(123)-F(12)-F(13)-F(23)\le\beta_{\mathrm C}\equiv0.
\end{equation}
Its quantum supremum $\beta_{\mathrm Q}$ is a discrimination analogue of a
Tsirelson bound. We prove the strict separation
\begin{equation}\label{eq:intro-bounds}
 0=\beta_{\mathrm C}
 <\frac{3\sqrt3}{2}-2
 \le\beta_{\mathrm Q}\le\frac23
 <1=\beta_{\mathrm{GPT}}.
\end{equation}
The lower bound is attained by a qubit trine, and we show it is optimal both
among qubit triples and among arbitrary pure-state triples; whether it remains
optimal for all mixed-state triples is left open. The upper bound $2/3$, on the
other hand, is not optimal, and we present a modest improvement immediately
after its proof. Since the strict separation is already established, we do
not pursue further nonsharp refinements here; the remaining question is to
determine the optimal quantum bound.

For arbitrary $n$, the classical facets have an explicit M\"obius description.
We prove that every one of these facets that can be violated in a GPT can
already be violated by quantum states in dimension at most three. The proof
relies on two small examples together with repeated hypotheses. In particular,
a four-state qubit family shows that quantum success profiles need not be
submodular, although this property is universal for three hypotheses. This
facet theorem highlights a point of contrast with multipartite Bell scenarios,
where some local facets admit no quantum violation yet are violated by
no-signalling correlations \cite{AlmeidaEtAl2010}.

Finally, as an outlook, we consider a prior-weighted version of the
discrimination problem, where instead of assigning error values to nonempty
subsets of $[n]$, one tracks how the optimal error varies with the prior
probabilities themselves. In the homogeneous formulation used below, the
original discrete profile $F$ is recovered at indicator weights. We give an
example of a quantum triple whose discrete discrimination profile is entirely
classical, yet whose weighted profile is nonclassical---suggesting that the
weighted profile can reveal considerably more than its discrete shadow.

The remainder of the paper is organized as follows.
Sections~\ref{sec:setup}--\ref{sec:envelope} establish the universal constraints
and the classical and GPT characterizations.
Sections~\ref{sec:three}--\ref{sec:all-facets} treat the three-state regions,
quantum bounds, and higher-$n$ facets. Section~\ref{sec:weighted} concerns the
weighted version of the discrimination problem and the example above. The
appendix collects rigorous evidence for the trine conjecture, including its
optimality among pure-state triples.

\section{Profiles and universal measurement constraints}\label{sec:setup}

\subsection{Framework and normalization}

A finite-dimensional GPT state space is a nonempty compact convex set $K$.
An effect is an affine function $e:K\to[0,1]$, and a finite-outcome measurement
is a family $(e_x)_{x\in X}$ of nonnegative affine functions satisfying
$\sum_xe_x=1$. We use the no-restriction convention: every such measurement is
allowed. Classical state spaces are probability simplexes, while quantum state
spaces consist of density matrices on a finite-dimensional complex Hilbert
space, with measurements given by POVMs. Background on discrimination in GPTs
can be found in \cite{KimuraMiyaderaImai2009,NuidaKimuraMiyadera2010}.

Let $n\ge2$ and fix states $s_1,\ldots,s_n\in K$. The case $n=1$ has only the
zero error profile. For nonempty $H\subseteq[n]$, define
\begin{align}
 D(H)&:=\max_{(e_i)_{i\in H}}\sum_{i\in H}e_i(s_i),\label{eq:D-def}\\
 F(H)&:=|H|-D(H),\label{eq:F-def}
\end{align}
where the effects sum to the unit effect. Set $F(\varnothing)=D(\varnothing)=0$.
The maximum exists by compactness of the set of measurements with $|H|$
outcomes. Zero effects are allowed.

The quantities $D(H)$ and $F(H)$ are \emph{not probabilities}: the equal-prior
success and error probabilities are $D(H)/|H|$ and $F(H)/|H|$. The
unnormalized convention permits direct comparisons between different
subfamilies. In particular,
\begin{equation}\label{eq:normalization}
 F(\{i\})=0,\quad D(\{i\})=1,\qquad
 0\le F(H)\le |H|-1\quad(H\ne\varnothing).
\end{equation}
The upper bound follows by always guessing one fixed hypothesis. Restricting a
measurement to a smaller subproblem by relabeling the other outcomes, and
extending a measurement by a zero effect, give
\begin{equation}\label{eq:increment}
 0\le F(H\cup\{j\})-F(H)\le1\qquad(j\notin H).
\end{equation}

We call $F$ the \emph{discrimination profile}, or \emph{error profile}, and $D$
the \emph{success profile}. The states in the list need not be distinct.

\begin{definition}[Profile regions]\label{def:profile-regions}
Let $\mathcal I_n=\{H\subseteq[n]:|H|\ge2\}$. For a finite-dimensional compact
convex state space $K$, let $F_{K,\mathbf s}$ denote the error profile in
\eqref{eq:D-def}--\eqref{eq:F-def} of a list
$\mathbf s=(s_1,\ldots,s_n)\in K^n$, with all affine measurements allowed. Its
\emph{profile region} is
\begin{equation}\label{eq:fixed-space-region}
 \mathcal P_n(K):=
 \left\{\bigl(F_{K,\mathbf s}(H)\bigr)_{H\in\mathcal I_n}:
       \mathbf s\in K^n\right\}
 \subseteq\R^{\mathcal I_n}\cong\R^{2^n-n-1}.
\end{equation}
The \emph{GPT profile region} is
\begin{equation}\label{eq:Greg-definition}
 \Greg_n:=\bigcup_K\mathcal P_n(K),
\end{equation}
where $K$ ranges over all nonempty compact convex subsets of $\R^d$, for all
finite $d\ge1$.

Let
\[
 \Delta_m:=\left\{p\in\R^m:p_j\ge0,\ \sum_{j=1}^mp_j=1\right\},
 \qquad
 \mathcal D_d:=\{\rho\in M_d(\mathbb C):\rho\ge0,\ \Tr\rho=1\}.
\]
The \emph{classical} and \emph{quantum profile regions} are, respectively,
\begin{equation}\label{eq:Creg-Qreg-definition}
 \Creg_n:=\bigcup_{m\ge1}\mathcal P_n(\Delta_m),
 \qquad
 \Qreg_n:=\bigcup_{d\ge1}\mathcal P_n(\mathcal D_d).
\end{equation}
All alphabet sizes and Hilbert-space dimensions in these unions are finite.
No closure is taken in any of the definitions.
\end{definition}

The regions are thus expressed in error coordinates, with the empty-set and
singleton coordinates omitted because they are fixed. We identify a full
profile $F$ with its coordinate vector when writing $F\in\Greg_n$, and
similarly for the other regions. The affine transformation $D(H)=|H|-F(H)$
converts every statement into success coordinates. Classical states embed as
diagonal density matrices, giving the inclusions~\eqref{eq:hierarchy}.

\begin{proposition}[Convexity]\label{prop:convex}
The three profile regions are convex.
\end{proposition}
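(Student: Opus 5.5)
The plan is to realize a convex combination of two profiles by a ``flagged'' direct-sum construction. Let $F_0=F_{K_0,\mathbf s}$ and $F_1=F_{K_1,\mathbf t}$ be two profiles and $\lambda\in[0,1]$. Form the state space
\[
 K:=\{(1-\mu)(x,0)+\mu(0,y):x\in K_0,\ y\in K_1,\ \mu\in[0,1]\},
\]
which is the convex hull of $K_0\times\{0\}$ and $\{0\}\times K_1$, placed in affinely independent positions in $\R^{d_0}\oplus\R^{d_1}\oplus\R$ by adding an extra coordinate that records $\mu$. This is the GPT analogue of a classical mixture with a perfectly readable flag. We take the states $r_i:=(1-\lambda)(s_i,0)+\lambda(0,t_i)$ and aim to prove that $D_{K,\mathbf r}(H)=(1-\lambda)D_{K_0,\mathbf s}(H)+\lambda D_{K_1,\mathbf t}(H)$ for every $H$. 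Since $F=|H|-D$ is affine in $D$, this gives $(1-\lambda)F_0+\lambda F_1$.

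The key structural fact is that affine functions on $K$ correspond bijectively to pairs of affine functions on $K_0$ and $K_1$. Because $K$ is the free convex combination (direct sum) of $K_0$ and $K_1$, any pair $(f,g)$ extends uniquely to an affine function on $K$ via $(1-\mu)f(x)+\mu g(y)$. Such a function is nonnegative iff $f\ge0$ and $g\ge0$, and the unit effect corresponds to $(1,1)$. Hence measurements on $K$ are exactly pairs of measurements $(e^0_i)$ on $K_0$ and $(e^1_i)$ on $K_1$ with the same outcome set.

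With this, the proof proceeds in two steps. The success of a measurement on the $r_i$ is $(1-\lambda)\sum e^0_i(s_i)+\lambda\sum e^1_i(t_i)$. The two components are optimized independently, so the maximum is the required convex combination. The step needing the most care is checking that extension and well-definedness hold at points that have several representations, such as $\mu=0$ or $\mu=1$. The extra $\mu$-coordinate makes the representation unique except at those endpoints, where only one factor contributes, so the formula remains consistent.

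Finally, we check that each region is closed under this construction:
\begin{itemize}[label={}]
\item \emph{Classical case.} If $K_0=\Delta_m$ and $K_1=\Delta_{m'}$, then $K$ is affinely isomorphic to $\Delta_{m+m'}$, which is disjoint-union mixing.
\item \emph{Quantum case.} If $K_0=\mathcal D_d$ and $K_1=\mathcal D_{d'}$, we cannot literally use $K$, since it is not a quantum state space. Instead we embed it as the block-diagonal states in $\mathcal D_{d+d'}$ with $r_i=(1-\lambda)s_i\oplus\lambda t_i$. Every POVM on $\mathcal D_{d+d'}$ compresses to its diagonal blocks, which are POVMs on each block and yield the same success values. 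Conversely, block-diagonal POVMs realize all pairs. So the same identity for $D$ holds.
\item \emph{GPT case.} $K$ is a compact convex set in finite dimension, so it is admissible as it stands.
\end{itemize}
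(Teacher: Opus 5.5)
Your proof is correct and follows the same route as the paper: a flagged direct sum of the two state spaces (a disjoint-union simplex in the classical case, block-diagonal states in the quantum case), on which every measurement splits into a pair of block measurements, so the optimal success is the weighted sum of the block optima. Your identification of effects on $\operatorname{conv}(K_0\times\{0\}\cup\{0\}\times K_1)$ with pairs of effects spells out what the paper states informally as reading the flag first and conditioning on it; the affineness of the extension follows from the ambient formula $a\cdot u+b\cdot v+a_0+(b_0-a_0)\mu$.
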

\begin{proof}
Mix two preparations with probabilities $\lambda$ and $1-\lambda$, independent
of the hypothesis, and retain a perfectly readable classical flag indicating
the preparation. The flag can be read first, after which the optimal measurement
for its block is performed. Conversely, conditioning any measurement on the
flag bounds its success by the corresponding weighted sum of block optima.
Thus the new profile is $\lambda F_1+(1-\lambda)F_2$.
For quantum states this is the block-diagonal construction
$\rho_i=\lambda\rho_i^{(1)}\oplus(1-\lambda)\rho_i^{(2)}$; the classical and
GPT constructions are the corresponding flagged direct sums.
\end{proof}

\subsection{One-point gluing and the tree bound}\label{sec:universal}

\begin{proposition}[One-point gluing]\label{prop:one-point}
If $A\cap B=\{j\}$, then
\begin{equation}\label{eq:one-point}
 F(A\cup B)\ge F(A)+F(B).
\end{equation}
\end{proposition}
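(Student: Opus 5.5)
The plan is to work in success coordinates, where the claim reads $D(A\cup B)\le D(A)+D(B)-1$. This is equivalent to the statement because
\[
|A\cup B| = |A|+|B|-1 \quad\text{when } |A\cap B|=1 .
\]
Fix an optimal measurement $(e_i)_{i\in A\cup B}$ for the subfamily $A\cup B$; it exists by compactness, as noted after \eqref{eq:F-def}. From it I will build two feasible measurements, one for $A$ and one for $B$. Each reassigns the outcomes outside the respective subproblem to the common hypothesis $j$.

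Concretely, for $A$ define $f_i:=e_i$ for $i\in A\setminus\{j\}$ and
\[
f_j:=e_j+\sum_{k\in B\setminus\{j\}}e_k .
\]
This uses $A\cup B=A\sqcup(B\setminus\{j\})$. These are nonnegative affine functions summing to the unit effect, so they form a valid measurement for $A$. Define the measurement $(g_i)_{i\in B}$ for $B$ symmetrically, with $g_j:=e_j+\sum_{k\in A\setminus\{j\}}e_k$. By definition of $D$,
\[
D(A)\ge \sum_{i\in A}f_i(s_i),\qquad D(B)\ge\sum_{i\in B}g_i(s_i).
\]

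Adding the two right-hand sides, every term $e_i(s_i)$ with $i\ne j$ appears exactly once, since $A\setminus\{j\}$ and $B\setminus\{j\}$ are disjoint. The $j$-terms contribute
\[
e_j(s_j)+\sum_{k\in B\setminus\{j\}}e_k(s_j)+e_j(s_j)+\sum_{k\in A\setminus\{j\}}e_k(s_j)
= e_j(s_j)+\sum_{k\in A\cup B}e_k(s_j)
= e_j(s_j)+1 ,
\]
using normalization $\sum_k e_k=1$ evaluated at $s_j$. The total is therefore $D(A\cup B)+1$, which gives $D(A)+D(B)\ge D(A\cup B)+1$, and hence \eqref{eq:one-point}.

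The only delicate step is the bookkeeping at $j$. One must check that the reassigned mass from both sides recombines into exactly one full copy of the unit effect evaluated at $s_j$, and that the overlap $e_j(s_j)$ is counted twice with precisely one copy absorbed into that unit. This is exactly where $|A\cap B|=1$ is used: a larger intersection would double-count several diagonal terms and the argument fails. Degenerate cases such as $A\subseteq B$ (so $A=\{j\}$) are harmless, since then $F(A)=0$ and the claim reduces to monotonicity.
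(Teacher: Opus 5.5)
Your proposal is correct and follows essentially the same argument as the paper. Like the paper, you take an optimal measurement for $A\cup B$, reassign the outcomes outside each subproblem to the common hypothesis $j$, and use $\sum_k e_k(s_j)=1$ to show that the two success sums add up to exactly $D(A\cup B)+1$.
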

\begin{proof}
Choose an optimal measurement $(e_i)_{i\in A\cup B}$ for the union. For the
$A$-problem, assign every outcome in $B\setminus\{j\}$ to $j$; for the
$B$-problem, assign every outcome in $A\setminus\{j\}$ to $j$.
If the success sums of these two measurements are $d_A$ and $d_B$, then
\[
 d_A+d_B
 =\sum_{i\in A\cup B}e_i(s_i)+\sum_{i\in A\cup B}e_i(s_j)
 =D(A\cup B)+1.
\]
Their error sums therefore add up to $F(A\cup B)$. Minimizing each subproblem
separately proves the claim.
\end{proof}

\begin{corollary}[Maximum-spanning-tree bound]\label{cor:tree}
For $|H|=m\ge2$,
\begin{equation}\label{eq:tree}
 F(H)\ge\max_{\mathcal T}\sum_{\{i,j\}\in E(\mathcal T)}F(ij)
 \ge\frac{2}{m}\sum_{\{i,j\}\subseteq H}F(ij).
\end{equation}
\end{corollary}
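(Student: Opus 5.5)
The first inequality will follow by induction on the number of edges of a tree, using Proposition~\ref{prop:one-point}; the second by averaging over a suitable family of spanning trees.

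\emph{First inequality.} Since $H$ is finite, a maximizing spanning tree exists, so it suffices to fix an arbitrary spanning tree $\mathcal T$ on $H$ and prove $F(H)\ge\sum_{\{i,j\}\in E(\mathcal T)}F(ij)$. We induct on $m=|H|$. For $m=2$ the tree is the single edge $H$, and equality holds. For $m\ge3$, choose a leaf $\ell$ of $\mathcal T$ with unique neighbour $k$. Put $A=H\setminus\{\ell\}$ and $B=\{k,\ell\}$, so that $A\cap B=\{k\}$ and $A\cup B=H$. Proposition~\ref{prop:one-point} gives $F(H)\ge F(A)+F(k\ell)$. The tree $\mathcal T-\ell$ is a spanning tree on $A$ with $|A|=m-1\ge2$, so the induction hypothesis gives $F(A)\ge\sum_{E(\mathcal T-\ell)}F(ij)$. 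Adding the two bounds completes the step.

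\emph{Second inequality.} A maximum is at least any average, so it suffices to find a probability distribution on spanning trees of the complete graph $K_m$ on $H$ under which every edge appears with probability $(m-1)/\binom m2=2/m$. Then the expected tree sum is $\frac2m\sum_{\{i,j\}\subseteq H}F(ij)$. Two choices work.
\begin{itemize}
\item \emph{Uniform distribution on all $m^{m-2}$ spanning trees.} The symmetric group acts transitively on the edges of $K_m$ and preserves this distribution, so all edge probabilities are equal. Since each tree has $m-1$ edges, the common value is $2/m$.
\item \emph{The $m$ stars $\mathcal T_c$ centred at each $c\in H$.} Each edge $\{i,j\}$ lies in exactly two of these stars, namely $\mathcal T_i$ and $\mathcal T_j$. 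Hence
\[
\sum_{c\in H}\sum_{e\in E(\mathcal T_c)}F(e)=2\sum_{\{i,j\}\subseteq H}F(ij),
\]
and the maximum over $c$ is at least $\frac1m$ times this sum.
\end{itemize}
The star argument is the simplest, and I would use it.

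There is no real obstacle. The only point needing care is checking that the hypotheses of Proposition~\ref{prop:one-point} hold at each step of the induction: the intersection must be exactly the single vertex $k$, and both pieces must be nonempty. Both hold because $\ell$ is a leaf and $m\ge3$.
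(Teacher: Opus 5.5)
Your proof is correct and follows the paper's route. The first inequality is the same leaf-removal induction via Proposition~\ref{prop:one-point}, and your first averaging option, the uniform distribution on labeled spanning trees with its symmetry argument, is exactly the paper's proof of the second inequality. The star-averaging variant you prefer is equally valid and slightly more elementary, since it needs only that each edge lies in exactly two of the $m$ stars.
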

\begin{proof}
Remove a leaf of a spanning tree and apply Proposition~\ref{prop:one-point}
to its incident edge and the remaining vertex set. Induction proves the first
inequality. In a uniformly random labeled spanning tree, symmetry and the
number $m-1$ of edges imply that each of the $\binom m2$ edges occurs with
probability $2/m$. Taking the average proves the second inequality.
\end{proof}

For quantum states, the pairwise bound recalled in
\cite[Eq.~(6)]{AudenaertMosonyi2014}, due to Qiu \cite{Qiu2008}, is
\[
 F(H)\ge\frac{1}{m-1}\sum_{\{i,j\}\subseteq H}F(ij).
\]
Since $2/m\ge1/(m-1)$, \eqref{eq:tree} is at least as strong, and its averaged
form is strictly stronger when $m>2$ and the sum of pairwise errors is positive.
For $m$ identical states, the tree bound gives the exact value $m-1$, whereas
this pairwise-average bound gives $m/2$.
The underlying tree inequality is classical, not a new probability-union
inequality: the coverage representation below identifies it with Hunter's
bound \cite{Hunter1976}. Its application here emphasizes that useful quantum
bounds may follow without any specifically quantum inequality.

\section{A fixed measurement and classical profiles}\label{sec:coverage}

Given a measurement $M=(e_x)_{x\in X}$, let $p_i^M(x)=e_x(s_i)$. If the
measurement is fixed but its classical decoding may depend on $H$, its optimal
success is
\begin{equation}\label{eq:theta-M}
 \theta_M(H)=\sum_{x\in X}\max_{i\in H}p_i^M(x),
 \qquad\theta_M(\varnothing)=0.
\end{equation}
Indeed, after observing $x$, one chooses an index maximizing its likelihood.
Consequently,
\begin{equation}\label{eq:decode}
 D(H)=\max_M\theta_M(H).
\end{equation}
An $H$-indexed discrimination measurement is itself one of the measurements on
the right, and optimal decoding of an arbitrary measurement gives an allowed
$H$-indexed measurement.

Define the coefficient polytope
\begin{equation}\label{eq:Coeff}
 \Coeff_n:=\left\{(c_B)_{\varnothing\ne B\subseteq[n]}:
 c_B\ge0,\quad\sum_{B\ni i}c_B=1\quad(i\in[n])\right\}.
\end{equation}
Every coordinate lies in $[0,1]$. For $c\in\Coeff_n$, set
\begin{equation}\label{eq:coverage}
 \theta_c(H):=\sum_{B:B\cap H\ne\varnothing}c_B.
\end{equation}
These are the normalized coverage functions: $\theta_c(\varnothing)=0$ and
$\theta_c(\{i\})=1$. They also occur as finite extremal-coefficient functions;
see \cite{StrokorbSchlather2015} for their intrinsic characterization by
complete alternation.

\begin{proposition}[One measurement and classical realizability]
\label{prop:one-measurement}
The functions $\theta_M$ obtained from a fixed measurement on $n$ states are
exactly the functions $\theta_c$ with $c\in\Coeff_n$. Every such function is
the \emph{full optimal success profile} of $n$ classical probability
distributions on at most $2^n-1$ outcomes.
\end{proposition}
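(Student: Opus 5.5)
The plan has two directions plus a realizability claim.

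\emph{From a measurement to a coverage function.} Fix $M=(e_x)_{x\in X}$ and write $p_i(x)=e_x(s_i)$, so that $\sum_x p_i(x)=1$ for each $i$. I would use the layer-cake decomposition of each column $x$. Order the distinct values of $p_\cdot(x)$ decreasingly and write
\[
 \max_{i\in H}p_i(x)=\int_0^\infty \1\bigl[H\cap B_x(t)\ne\varnothing\bigr]\,dt ,
 \qquad B_x(t):=\{i:p_i(x)>t\}.
\]
The map $t\mapsto B_x(t)$ is a decreasing chain of sets that is piecewise constant. Hence
\[
 \max_{i\in H}p_i(x)=\sum_{B\ne\varnothing}\lambda_x(B)\,\1[B\cap H\ne\varnothing],
\]
where $\lambda_x(B)\ge0$ is the Lebesgue measure of $\{t: B_x(t)=B\}$. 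Set $c_B:=\sum_x\lambda_x(B)$. Then $\theta_M=\theta_c$ by \eqref{eq:theta-M} and \eqref{eq:coverage}. Taking $H=\{i\}$ gives $\sum_{B\ni i}c_B=\sum_x p_i(x)=1$, so $c\in\Coeff_n$.

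\emph{From a coverage function to a measurement.} Conversely, given $c\in\Coeff_n$, take the outcome set $X=\{B:\varnothing\ne B\subseteq[n]\}$, which has $2^n-1$ elements. Define distributions $p_i(B):=c_B\,\1[i\in B]$. These are probability distributions precisely because $c\in\Coeff_n$. Then $\max_{i\in H}p_i(B)=c_B\,\1[B\cap H\ne\varnothing]$, so summing over $B$ gives exactly $\theta_c(H)$. Regard these as states $s_i=p_i$ in the simplex $\Delta_{2^n-1}$, with the measurement $M$ given by the coordinate readout. This exhibits $\theta_c$ as some $\theta_M$, which completes the equality of the two function classes.

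\emph{Realizability as a full optimal profile.} The remaining claim, and the only step needing an argument beyond bookkeeping, is that for these classical states the optimal success $D(H)$ equals $\theta_c(H)$, and not something larger coming from a different measurement. By \eqref{eq:decode}, $D(H)=\max_{M'}\theta_{M'}(H)$. In a simplex every measurement is a classical post-processing of the coordinate readout. Explicitly, an effect on $\Delta_m$ is a vector $e\in[0,1]^m$, so $e_y(s_i)=\sum_B K(y|B)p_i(B)$ for a stochastic kernel $K$. Then
\[
 \sum_y\max_{i\in H}\sum_B K(y|B)p_i(B)\le\sum_y\sum_B K(y|B)\max_{i\in H}p_i(B)=\theta_c(H),
\]
so the coordinate measurement is optimal for every $H$ simultaneously. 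I expect this post-processing step to be the main point to state carefully, namely that affine effects on $\Delta_m$ are exactly vectors in $[0,1]^m$. Everything else is the routine layer-cake computation.
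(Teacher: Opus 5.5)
Your proposal is correct and takes essentially the same route as the paper. Your layer-cake coefficients $\lambda_x(B)$, summed over $x$, are exactly the paper's $c_B$: the measure of the region of $X\times[0,1]$ lying under precisely the graphs indexed by $B$. Your converse uses the same distributions $p_i(B)=c_B\1_{\{i\in B\}}$, and your explicit stochastic-kernel argument spells out the paper's one-line remark that no classical post-processing can beat direct observation.
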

\begin{proof}
For a fixed measurement, consider $\Omega=X\times[0,1]$, with counting measure
on $X$ and Lebesgue measure on $[0,1]$. Under the graph of each distribution put
\[
 A_i:=\{(x,u):0\le u\le p_i^M(x)\}.
\]
Then $\mu(A_i)=1$ and
$\theta_M(H)=\mu(\bigcup_{i\in H}A_i)$.
To describe the overlaps, assign to each point the set of graphs under which
it lies:
\[
 \omega\longmapsto B(\omega):=\{i\in[n]:\omega\in A_i\}.
\]
For a nonempty subset $B$, let $c_B$ be the measure of the region
$\{\omega:B(\omega)=B\}$. These disjoint regions partition
$\bigcup_i A_i$. A region contributes to the union indexed by $H$ exactly when
$B\cap H\ne\varnothing$, giving \eqref{eq:coverage}; summing the regions
containing $i$ gives the marginal equalities in \eqref{eq:Coeff}.

Conversely, for $c\in\Coeff_n$, use the nonempty subsets $B\subseteq[n]$ as
outcomes and define classical probability distributions
\begin{equation}\label{eq:classical-distributions}
 p_i(B):=c_B\1_{\{i\in B\}}.
\end{equation}
Their normalization follows from \eqref{eq:Coeff}. Direct observation of $B$,
followed by optimal decoding, gives
\[
 \sum_B\max_{i\in H}p_i(B)=\theta_c(H).
\]
No stochastic processing of a classical outcome can improve this optimum, so
this is the full classical success profile, not merely the performance of a
particular measurement.
\end{proof}

Write
\begin{equation}\label{eq:E-c}
 E_c(H):=|H|-\theta_c(H)
       =\sum_{\varnothing\ne B\subseteq[n]}c_B\bigl(|B\cap H|-1\bigr)_+,
\end{equation}
where $u_+=\max\{u,0\}$. The last equality follows by summing the marginal
constraints over $i\in H$.

\begin{corollary}[The classical region]\label{cor:classical-region}
In error coordinates,
\begin{equation}\label{eq:Creg}
 \Creg_n=\{(E_c(H))_{|H|\ge2}:c\in\Coeff_n\}.
\end{equation}
Equivalently, in success coordinates the classical region is precisely
$\{\theta_c:c\in\Coeff_n\}$. In particular, $\Creg_n$ is a rational polytope.
\end{corollary}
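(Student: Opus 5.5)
The plan is to prove the two inclusions in \eqref{eq:Creg} separately, and in both directions to reduce to Proposition~\ref{prop:one-measurement} applied to a measurement that sees the classical states completely.

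For the inclusion ``$\subseteq$'', take states $p_1,\ldots,p_n\in\Delta_m$. The identity measurement on $\Delta_m$ reads the outcome $x\in[m]$ directly, so $e_x(p_i)=p_i(x)$. We must show that this single measurement is optimal for every $H$, so that $D(H)=\theta_M(H)$ with $M$ the identity measurement. Any affine measurement $(e_j)_{j\in H}$ on the simplex is determined by its values on the vertices. Hence it is a stochastic post-processing $e_j(p)=\sum_x q(j\mid x)p(x)$, and
\[
 \sum_{i\in H}e_i(p_i)=\sum_x\sum_i q(i\mid x)p_i(x)\le\sum_x\max_{i\in H}p_i(x).
\]
This gives $D(H)=\theta_M(H)$ for all $H$ simultaneously. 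Proposition~\ref{prop:one-measurement} then writes $\theta_M=\theta_c$ for some $c\in\Coeff_n$. Passing to error coordinates via \eqref{eq:E-c} gives $F=E_c$.

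For the inclusion ``$\supseteq$'', take $c\in\Coeff_n$ and use the distributions \eqref{eq:classical-distributions} on the outcome set of nonempty $B\subseteq[n]$. Relabelling these $2^n-1$ outcomes as $[2^n-1]$ gives states in $\Delta_{2^n-1}$. The last part of Proposition~\ref{prop:one-measurement} states that the full optimal success profile is $\theta_c$, so its error profile is $E_c$.

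The success-coordinate statement follows from the affine identification $D=|H|-F$. For polytopality, note that $\Coeff_n$ is cut out by finitely many linear equalities and inequalities with integer coefficients, and it is bounded since every coordinate lies in $[0,1]$. It is therefore a rational polytope. The map $c\mapsto(E_c(H))_{|H|\ge2}$ is linear with integer coefficients $(|B\cap H|-1)_+$. The image of a rational polytope under a rational linear map is a rational polytope, being the convex hull of the images of finitely many rational vertices.

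The only step needing care is the optimality of the identity measurement, that is, the claim that no affine measurement on a simplex beats optimal decoding of the raw outcome. It reduces to the vertex description of affine effects given above.
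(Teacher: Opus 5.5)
Your proof is correct and follows essentially the paper's route: the corollary is read off from Proposition~\ref{prop:one-measurement}, together with the fact that direct readout followed by optimal decoding is optimal on a simplex, and polytopality comes from $\Coeff_n$ being a bounded rational polyhedron mapped by the integer-coefficient linear map $c\mapsto E_c$. The only difference is that you prove the optimality of direct readout explicitly, via the vertex description of affine effects, whereas the paper states it in one line inside the proof of that proposition.
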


The representation also applies to classical distributions on an arbitrary
measurable space: take densities with respect to their sum and use the same
subgraph construction. Thus an infinite classical alphabet does not enlarge
the discrete profile region.

\subsection{Signed coverage coefficients and classical facets}

The coverage coefficients are uniquely determined by the profile, even when
nonnegativity fails. This gives a useful coordinate system beyond the
classical region.

\begin{proposition}[Classical facets]\label{prop:all-classical-facets}
Let $D(\varnothing)=0$ and $D(\{i\})=1$. For each nonempty $B\subseteq[n]$, put
\begin{equation}\label{eq:signed-coefficients}
 \widehat c_B(D):=-\sum_{A\subseteq B}(-1)^{|B|-|A|}D([n]\setminus A).
\end{equation}
These are the unique signed coefficients satisfying
\begin{equation}\label{eq:signed-coverage}
 D(H)=\sum_{B:B\cap H\ne\varnothing}\widehat c_B(D),\qquad
 \sum_{B\ni i}\widehat c_B(D)=1.
\end{equation}
For $n\ge3$, the classical region has dimension $2^n-n-1$, and its distinct
facets are exactly
\begin{equation}\label{eq:classical-facets-all}
 \widehat c_B(D)\ge0\qquad(\varnothing\ne B\subseteq[n]).
\end{equation}
\end{proposition}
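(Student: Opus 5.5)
The plan is to pass to the complementary ``cover'' function $U(S):=D([n])-D([n]\setminus S)$, i.e. the mass of the regions not meeting $[n]\setminus S$, and then apply Möbius inversion on the Boolean lattice.

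\textbf{Uniqueness and existence of the signed coefficients.} Given any signed family $(\widehat c_B)$ satisfying \eqref{eq:signed-coverage}, put $g(A):=\sum_{\varnothing\ne B\subseteq A}\widehat c_B$. Since $B\cap H=\varnothing$ iff $B\subseteq[n]\setminus H$, we get
\[
 D(H)=g([n])-g([n]\setminus H).
\]
Taking $H=[n]$ gives $g([n])=D([n])$, because $g(\varnothing)=0$. Hence $g(A)=D([n])-D([n]\setminus A)$ for every $A$, and Möbius inversion gives
\[
 \widehat c_B=\sum_{A\subseteq B}(-1)^{|B|-|A|}g(A).
\]
For $B\ne\varnothing$ the constant $D([n])$ drops out, since $\sum_{A\subseteq B}(-1)^{|B|-|A|}=0$. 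This leaves exactly \eqref{eq:signed-coefficients}, which proves uniqueness. Conversely, define $\widehat c_B$ by \eqref{eq:signed-coefficients}. Inverting back yields the first identity in \eqref{eq:signed-coverage} for all $H$, including $H=\varnothing$ where both sides vanish. The marginal identity then follows by taking $H=\{i\}$ and using $D(\{i\})=1$. Thus $D\mapsto(\widehat c_B)$ is an affine bijection between profile vectors (with the fixed normalizations) and the affine space $\{c:\sum_{B\ni i}c_B=1\}$.

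\textbf{Dimension.} By Corollary~\ref{cor:classical-region} and uniqueness, $\Creg_n$ is the image of $\Coeff_n$ under this affine bijection. Hence $\Creg_n$ is affinely isomorphic to $\Coeff_n$, and its facets correspond to facets of $\Coeff_n$. The affine hull $\{\sum_{B\ni i}c_B=1\}$ has dimension $(2^n-1)-n$, since the $n$ marginal constraints are independent: each singleton coordinate $c_{\{i\}}$ appears in only one of them. To show $\Coeff_n$ is full-dimensional in this hull, I would exhibit a point with every $c_B>0$. The uniform choice $c_B=1/2^{n-1}$ works, because each $i$ lies in exactly $2^{n-1}$ nonempty sets $B$. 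Consequently $\dim\Creg_n=2^n-n-1$, which also matches the number of coordinates $|\mathcal I_n|$.

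\textbf{Facets.} $\Coeff_n$ is cut out in its hull by the inequalities $c_B\ge0$, so every facet is among these. The main point, where $n\ge3$ is needed, is that each $c_B\ge0$ defines a genuine and distinct facet. For a fixed $B$ it suffices to find a point with $c_B=0$ and all other coordinates positive. Then $\{c_B=0\}\cap\Coeff_n$ has relative interior points and dimension $2^n-n-2$. Moreover the facets for different $B$ are distinct, because each such point lies on only one of them.

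I would start from the uniform point and set $c_B=0$. Each $i\in B$ then loses mass $1/2^{n-1}$, and the plan is to restore it while keeping all other entries strictly positive:
\begin{enumerate}
 \item If $B\ne[n]$, add the missing mass to $c_{[n]}$, and subtract mass from $c_{\{j\}}$ for $j\notin B$ to restore their marginals. The subtracted amount is small since $c_{\{j\}}=1/2^{n-1}$; I would instead rescale with unequal initial weights, making the singletons large, so positivity survives.
 \item If $B=[n]$, redistribute the mass into the two-element sets. Each $i$ lies in $n-1\ge2$ of them, so adding $1/((n-1)2^{n-1})$ to each pair fixes all the marginals.
 \item If $B$ is a singleton $\{i\}$, add the mass to $c_{[n]}$ and compensate the other singletons, which again stay positive with a suitable initial weighting.
\end{enumerate}

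\textbf{Main obstacle.} The positivity bookkeeping in these constructions is the step I expect to be hardest. The cleanest route is probably a general argument: the facet $c_B=0$ is proper exactly when no marginal equation forces $c_B>0$. That happens only if $B$ is the sole set containing some $i$, which is impossible once the hull has enough sets. For $n=2$ it does fail, since $c_{\{1\}}=c_{\{2\}}=1-c_{\{12\}}$ makes two of the inequalities coincide, and this explains the hypothesis $n\ge3$.
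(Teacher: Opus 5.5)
Your framework is the paper's: Möbius inversion of $U(A)=D([n])-D([n]\setminus A)$, the affine isomorphism with the marginal affine space, the point $c_B=2^{1-n}$ for the dimension, and the reduction of each facet claim to finding a point of $\Coeff_n$ with $c_B=0$ and all other coordinates positive. Your constructions for $|B|\ge2$ also work: with nonsingletons at $\varepsilon$ the lost mass is only $\varepsilon$, and the paper simply refunds it to the singletons $c_{\{i\}}$, $i\in B$.

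\textbf{The gap is the singleton case} $B=\{i\}$, which is exactly where $n\ge3$ is used. Your transfer of $c_{\{i\}}$ to $c_{[n]}$ gives $c_{\{j\}}^{\mathrm{new}}=c_{\{j\}}^{\mathrm{old}}-c_{\{i\}}^{\mathrm{old}}$ for $j\ne i$. So it needs a strictly positive starting point with $c_{\{i\}}<c_{\{j\}}$ for every $j\ne i$. The weightings you actually name are permutation-symmetric: the uniform point, or all nonsingletons equal to $\varepsilon$ with large singletons. In both, all singletons start equal, so the transfer leaves $c_{\{j\}}=0$ exactly. Appealing to ``a suitable initial weighting'' therefore only restates what has to be shown.

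The general criterion in your last paragraph does not close this. Whether some marginal equation forces $c_B>0$ only decides whether the face $\{c_B=0\}$ is nonempty. Codimension one and distinctness depend on whether $c_B=0$ forces \emph{other} coordinates to vanish. Your own $n=2$ example shows the difference: no marginal forces $c_{\{1\}}>0$, yet $c_{\{1\}}=0$ forces $c_{\{2\}}=0$. Your criterion would thus wrongly certify $n=2$ as well.

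The missing idea is to load $i$'s marginal onto sets that no single other label fully shares. The paper does this as follows:
\begin{itemize}
\item set $c_{\{i\}}=0$;
\item put $\varepsilon$ on all nonsingletons other than the pairs $\{i,j\}$;
\item take $c_{\{i,j\}}=\bigl(1-(2^{n-1}-n)\varepsilon\bigr)/(n-1)$;
\item fill the remaining marginals with $c_{\{j\}}$, which tend to $(n-2)/(n-1)>0$.
\end{itemize}
Your transfer also works if you start from such an asymmetric point. Take $c_{\{i,j\}}=a$ and $\varepsilon$ on the other nonsingletons, with $0<\varepsilon<a<1/(n-1)$ and $\varepsilon$ small enough that all singletons start positive. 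The transfer then leaves $c_{\{j\}}=(n-2)(a-\varepsilon)>0$. For $n=2$, every nonsingleton containing $i$ also contains the other label, which is why no such point exists there.
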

\begin{proof}
Apply M\"obius inversion to $U(A)=D([n])-D([n]\setminus A)$, with
$U(\varnothing)=0$. This gives
$U(A)=\sum_{\varnothing\ne B\subseteq A}\widehat c_B(D)$, and hence
\eqref{eq:signed-coverage}; the marginal identities follow by taking
$H=\{i\}$. Thus the coverage map is an affine isomorphism between the
coefficient space with its $n$ marginal constraints and the normalized profile
space. The marginal constraints are independent, and $c_B=2^{1-n}$ is a
strictly positive feasible point. The dimension follows.

It remains to check that every coordinate inequality defines a facet.
For $|B|\ge2$, set $c_B=0$, give every other nonsingleton coefficient a common
small value $\varepsilon>0$, and complete the marginal constraints with
positive singleton coefficients. For $B=\{i\}$, put $c_i=0$ and give every
nonsingleton coefficient except the pairs $\{i,j\}$ the value $\varepsilon$.
Set
\[
 c_{ij}=\frac{1-(2^{n-1}-n)\varepsilon}{n-1}\qquad(j\ne i),
\]
and complete the remaining marginals with $c_j$ for $j\ne i$.
As $\varepsilon\downarrow0$, these singleton coefficients tend to
$(n-2)/(n-1)>0$. Thus in either case there is a feasible point with precisely
the specified coefficient zero and all the others positive. Each coordinate
hyperplane therefore supports a distinct facet, and the defining
inequalities of $\Coeff_n$ leave no other facets.
\end{proof}

For example, the singleton inequalities are always universal:
\begin{equation}\label{eq:singleton-universal}
 \widehat c_{\{i\}}(D)=D([n])-D([n]\setminus\{i\})\ge0.
\end{equation}
This is just monotonicity of optimal success. Section~\ref{sec:all-facets}
determines which of the other classical facets can be violated by quantum
states.

\section{Exact characterization of GPT profiles}\label{sec:envelope}

\begin{theorem}[Envelope characterization]\label{thm:envelope}
Let $n\ge2$, let $F:2^{[n]}\to\R$ satisfy $F(\varnothing)=0$, and set
$D(H)=|H|-F(H)$. The following are equivalent:
\begin{enumerate}[label=(\roman*)]
\item $F$ is the discrimination-error profile of $n$ states in a
finite-dimensional GPT.
\item There exist $c^{(1)},\ldots,c^{(r)}\in\Coeff_n$ such that
\begin{equation}\label{eq:envelope-D}
 D(H)=\max_{1\le\alpha\le r}\theta_{c^{(\alpha)}}(H)
 \quad\text{for every }H\subseteq[n].
\end{equation}
\item There exist $c^{(1)},\ldots,c^{(r)}\in\Coeff_n$ such that
\begin{equation}\label{eq:envelope-F}
 F(H)=\min_{1\le\alpha\le r}E_{c^{(\alpha)}}(H)
 \quad\text{for every }H\subseteq[n].
\end{equation}
\end{enumerate}
One may always take $1\le r\le2^n-n-1$. Every such profile has a realization
on a Cartesian product of classical simplexes, with all affine measurements
allowed.
\end{theorem}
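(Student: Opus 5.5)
The equivalence (ii)$\Leftrightarrow$(iii) is immediate from $E_c(H)=|H|-\theta_c(H)$, since $\max$ turns into $\min$ under $H\mapsto|H|-\cdot$. The work lies in (i)$\Rightarrow$(ii), the bound on $r$, and (ii)$\Rightarrow$(i) with the product realization.

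For (i)$\Rightarrow$(ii), I start from \eqref{eq:decode}, which gives $D(H)=\max_M\theta_M(H)$. By Proposition~\ref{prop:one-measurement}, each $\theta_M$ equals some $\theta_c$ with $c\in\Coeff_n$. For each $H$ with $|H|\ge2$, I fix an optimal $H$-indexed measurement $M_H$. Then $\theta_{M_H}(H)=D(H)$, while $\theta_{M_H}(H')\le D(H')$ for every $H'$. For $|H|\le1$ every $\theta_c$ already agrees with $D$. So the finite family $\{c^{(H)}\}_{|H|\ge2}$ realizes the envelope with $r\le 2^n-n-1$. The same counting gives the bound on $r$ in the converse direction: starting from any envelope, keep for each $H$ with $|H|\ge2$ one index $\alpha$ attaining the maximum. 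The pruned envelope still dominates nothing beyond $D$, so it equals $D$.

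For (ii)$\Rightarrow$(i), I use the state space $K=\Delta_{m_1}\times\cdots\times\Delta_{m_r}$. Here $\Delta_{m_\alpha}$ carries the classical distributions $p^{(\alpha)}_i(B)=c^{(\alpha)}_B\1_{\{i\in B\}}$ of \eqref{eq:classical-distributions}, and I set $s_i=(p^{(1)}_i,\ldots,p^{(r)}_i)$. The lower bound $D_K(H)\ge\max_\alpha\theta_{c^{(\alpha)}}(H)$ comes from measuring coordinate $\alpha$ alone: each effect on $\Delta_{m_\alpha}$ composed with the projection is affine on $K$.

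The main obstacle is the upper bound, namely showing that no affine measurement on the product does better than the best coordinate. The key structural fact I would prove is the following. An affine function $e$ on a product of convex sets decomposes as
\[
 e(x_1,\ldots,x_r)=\sum_\alpha f_\alpha(x_\alpha)-(r-1)\kappa,
\]
with $f_\alpha$ affine on $\Delta_{m_\alpha}$. This holds because affine functions on a product are sums of affine functions of each factor plus a constant. Normalization $\sum_i e_i=1$ then forces each $\sum_i f_{i,\alpha}$ to be constant on $\Delta_{m_\alpha}$. However, nonnegativity of $e_i$ on $K$ does not give nonnegativity of each $f_{i,\alpha}$, so a naive splitting fails.

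To get around this, I would argue via duality or an extreme-point reduction. The success $\sum_{i\in H}e_i(s_i)$ is linear in the measurement, and nonnegativity of $e_i$ on $K$ only needs to be checked at vertices of $K$, which are tuples of vertices $(v_1,\ldots,v_r)$. Concretely, I would write
\[
 \sum_{i\in H}e_i(s_i)=\sum_i\sum_\alpha f_{i,\alpha}(p^{(\alpha)}_i)-\text{const}.
\]
Then I would expand each $f_{i,\alpha}(p^{(\alpha)}_i)$ as an average over the vertices $B$ with weights $c^{(\alpha)}_B$. Since $i\in B$, these weights are shared among all $i\in B$, and I would apply the vertex nonnegativity at suitable product vertices to bound the total by a convex combination of the $\theta_{c^{(\alpha)}}(H)$. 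If this combinatorial route resists, the fallback is LP duality. The maximum over affine measurements is a linear program, and I would exhibit a dual feasible certificate of value $\max_\alpha\theta_{c^{(\alpha)}}(H)$. I expect constructing this certificate, or making the vertex-averaging argument rigorous, to be the crux of the proof.
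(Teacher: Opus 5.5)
Your handling of (ii)$\Leftrightarrow$(iii), of (i)$\Rightarrow$(ii) with the bound $r\le 2^n-n-1$, and of the lower bound $D_K(H)\ge\max_\alpha\theta_{c^{(\alpha)}}(H)$ in the product realization is correct and matches the paper. The gap is the step you yourself flag as the crux. You never show that no affine measurement on $K=\prod_\alpha\Delta_{m_\alpha}$ beats the best single coordinate, and both routes you offer (vertex averaging, an LP dual certificate) remain plans. This step is the whole content of the converse, and it is not automatic. Take the square $[0,1]^2$ with states $(0,0),(1,0),(0,1)$, i.e.\ $c^{(1)}_{13}=c^{(1)}_{2}=1$ and $c^{(2)}_{12}=c^{(2)}_{3}=1$. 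If the product vertices could be read out jointly, as in a classical joint simplex, the triple success would be $3$, whereas each coordinate gives only $2$. So the argument must exploit the additive separability of affine effects on a product, and must convert positivity on $K$ into positivity of the separate pieces.

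The idea that removes your stated obstacle (``nonnegativity of $e_i$ on $K$ does not give nonnegativity of each $f_{i,\alpha}$'') has two parts. The constant terms in the decomposition are free, and minimization over a full Cartesian product splits coordinatewise. Absorb constants so that $e_i=\sum_\alpha f_{i,\alpha}$. Then $\min_K e_i=\sum_\alpha\min_{\Delta_{m_\alpha}}f_{i,\alpha}\ge0$, so
\[
 g_{i,\alpha}:=f_{i,\alpha}-\min_{\Delta_{m_\alpha}}f_{i,\alpha}+\tfrac1r\min_K e_i\ \ge0,
 \qquad \sum_\alpha g_{i,\alpha}=e_i .
\]
Your ``suitable product vertex'' is simply the tuple of coordinatewise minimizers. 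Your normalization remark now shows that $\lambda_\alpha:=\sum_{i\in H}g_{i,\alpha}$ is constant on $\Delta_{m_\alpha}$. These constants are nonnegative and sum to $1$. For $\lambda_\alpha>0$ the functions $(g_{i,\alpha}/\lambda_\alpha)_{i\in H}$ form a measurement on $\Delta_{m_\alpha}$, and for $\lambda_\alpha=0$ all $g_{i,\alpha}$ vanish. This is the paper's Lemma~\ref{lem:product-measurement}: every affine measurement on a product is a convex mixture of single-coordinate measurements. Hence
\[
 \sum_{i\in H}e_i(s_i)=\sum_\alpha\sum_{i\in H}g_{i,\alpha}\bigl(p^{(\alpha)}_i\bigr)\le\sum_\alpha\lambda_\alpha\theta_{c^{(\alpha)}}(H)\le\max_\alpha\theta_{c^{(\alpha)}}(H),
\]
where the first inequality is optimality of direct observation on a simplex. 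No duality certificate or combinatorial averaging is needed.
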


\subsection{Necessity}

For each $H$ with $|H|\ge2$, choose a measurement optimal for $H$ and let
$c^H\in\Coeff_n$ be its coefficient vector from
Proposition~\ref{prop:one-measurement}. Then
\[
 \theta_{c^H}(K)\le D(K)\quad\text{for all }K,
 \qquad\theta_{c^H}(H)=D(H).
\]
Taking the maximum over these $2^n-n-1$ supporting functions proves
\eqref{eq:envelope-D}. No additional functions are needed for singletons or the
empty set, where all normalized coverage functions agree. Subtracting from
$|H|$ proves the equivalence with \eqref{eq:envelope-F}.

This argument does not require that all affine effects be available in the
original model. It applies to restricted measurement sets closed under
classical decoding; if optima are only suprema, choose an approximating sequence
for each $H$ and take a convergent subsequence in the compact polytope
$\Coeff_n$. The limiting function remains dominated by $D$ everywhere and
attains $D(H)$. Such restrictions therefore do not enlarge the outer profile
region obtained here.

\subsection{Measurements on a Cartesian product}

\begin{lemma}[Product-measurement decomposition]\label{lem:product-measurement}
	Let $K=\prod_{\alpha=1}^r K_\alpha$ be a finite Cartesian product of nonempty
	compact convex sets. Every finite-outcome affine measurement $(e_y)_{y\in Y}$
	on $K$ is a convex mixture of measurements depending on a single coordinate:
	\begin{equation}\label{eq:product-measurement}
		e_y(x_1,\ldots,x_r)=\sum_{\alpha=1}^r
		\lambda_\alpha e_y^{(\alpha)}(x_\alpha),
		\qquad \lambda_\alpha\ge0,\quad\sum_\alpha\lambda_\alpha=1,
	\end{equation}
	where $(e_y^{(\alpha)})_{y\in Y}$ is a measurement on $K_\alpha$. 
\end{lemma}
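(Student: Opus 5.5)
An affine function on a product of convex sets is a sum of affine functions of the individual coordinates. Fix base points $b_\alpha\in K_\alpha$. For an affine $f$ on $K$, I will show
\[
 f(x_1,\ldots,x_r)=f(b)+\sum_{\alpha}\bigl(f(b_1,\ldots,x_\alpha,\ldots,b_r)-f(b)\bigr).
\]
To prove this I will change one coordinate at a time. The claim is that the increment $f(\ldots,x_\alpha,\ldots)-f(\ldots,b_\alpha,\ldots)$ does not depend on the other coordinates. This follows from affinity applied to the midpoint identity
\[
 \tfrac12(x_\alpha,y')+\tfrac12(b_\alpha,y)=\tfrac12(x_\alpha,y)+\tfrac12(b_\alpha,y'),
\]
where both sides are points of $K$ and $y,y'$ denote the remaining coordinates. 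Thus every effect decomposes as $e_y(x)=\sum_\alpha g_{y,\alpha}(x_\alpha)$ with each $g_{y,\alpha}$ affine on $K_\alpha$. The decomposition is unique up to constants $g_{y,\alpha}\mapsto g_{y,\alpha}+k_{y,\alpha}$ subject to $\sum_\alpha k_{y,\alpha}=0$.

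The key step, and the main obstacle, is choosing these constants so that each piece is nonnegative and the pieces for each $\alpha$ sum to a constant $\lambda_\alpha$. For the normalization, note that $\sum_y e_y=1$ forces $\sum_y g_{y,\alpha}$ to be constant on $K_\alpha$: by the uniqueness of the decomposition it is a constant function of $x_\alpha$. For nonnegativity, set $m_{y,\alpha}:=\min_{K_\alpha}g_{y,\alpha}$, which exists by compactness. Since the coordinates can be chosen independently to minimize each term, $\min_K e_y=\sum_\alpha m_{y,\alpha}=:\mu_y\ge0$. Now define $h_{y,\alpha}:=g_{y,\alpha}-m_{y,\alpha}\ge0$ for $\alpha\ge2$, and $h_{y,1}:=g_{y,1}-m_{y,1}+\mu_y\ge0$. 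Then $e_y=\sum_\alpha h_{y,\alpha}$, every $h_{y,\alpha}$ is nonnegative and affine, and $\Lambda_\alpha:=\sum_y h_{y,\alpha}$ is a nonnegative constant. Evaluating at any point shows $\sum_\alpha\Lambda_\alpha=1$.

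Finally, normalize. For each $\alpha$ with $\Lambda_\alpha>0$, set $\lambda_\alpha=\Lambda_\alpha$ and $e^{(\alpha)}_y=h_{y,\alpha}/\Lambda_\alpha$; this is a measurement on $K_\alpha$. If $\Lambda_\alpha=0$, then every $h_{y,\alpha}$ vanishes identically, because the $h_{y,\alpha}$ are nonnegative and sum to $0$. In that case set $\lambda_\alpha=0$ and let $e^{(\alpha)}$ be any measurement, for example the trivial one placing the unit effect on a single outcome. This yields \eqref{eq:product-measurement}.
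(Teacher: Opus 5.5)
Your proposal is correct and follows essentially the paper's route: split each effect into coordinatewise affine pieces, use the free constants and the coordinatewise minima over the full product to make every piece nonnegative, note that each per-coordinate sum over $y$ is a constant $\lambda_\alpha$, and normalize, handling $\lambda_\alpha=0$ separately. The differences are cosmetic. You derive the additive splitting from the midpoint identity instead of the direct-sum structure of the affine hull, and you put all the slack $\mu_y$ into the first coordinate, whereas the paper equalizes the minima across coordinates.
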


\begin{proof}
	Fix a basepoint in each $K_\alpha$; this identifies the affine hull of $K$
	with a direct sum of vector spaces, so the linear part of any affine
	functional on $K$ splits as a sum of linear functionals on the summands.
	Hence each $e_y$ can be written as
	\[
	e_y(x)=\sum_{\alpha=1}^r h_y^{(\alpha)}(x_\alpha),
	\]
	with $h_y^{(\alpha)}$ affine on $K_\alpha$. This decomposes the linear part
	of $e_y$ uniquely; the constant terms of the $h_y^{(\alpha)}$ are free
	subject to summing to the (fixed) constant term of $e_y$, giving $r-1$
	free parameters. We use these to impose the $r-1$ equations
	$\min_{K_1}h_y^{(1)}=\cdots=\min_{K_r}h_y^{(r)}=:m_y$, which is therefore
	always achievable.
	
	Because $K$ is a full Cartesian product, minimization over $K$ splits coordinatewise:
	\[
	\min_{x\in K}e_y(x)=\sum_{\alpha=1}^r\min_{K_\alpha}h_y^{(\alpha)}=rm_y.
	\]
	This implies that $m_y\ge0$ as $e_y^{(\alpha)}\geq 0$.
	Thus each coordinate function is
	nonnegative.
	
	Since $\sum_ye_y=1$, varying only coordinate $\alpha$ shows that
	$\sum_yh_y^{(\alpha)}$ is a constant $\lambda_\alpha$. These constants are
	nonnegative and sum to one. For $\lambda_\alpha>0$, define
	$e_y^{(\alpha)}=h_y^{(\alpha)}/\lambda_\alpha$; zero-weight terms can be filled with any measurement. This proves \eqref{eq:product-measurement}.
	
\end{proof}

\subsection{Converse realization and dimension}

For each $c^{(\alpha)}$, take the classical distributions
$p_i^{(\alpha)}$ from \eqref{eq:classical-distributions}, on an alphabet
$X_\alpha$ with zero-probability outcomes omitted. Define
\begin{equation}\label{eq:product-realization}
 K=\prod_{\alpha=1}^r\Delta(X_\alpha),\qquad
 s_i=(p_i^{(1)},\ldots,p_i^{(r)}).
\end{equation}
Reading coordinate $\alpha$ gives success $\theta_{c^{(\alpha)}}(H)$, so the
optimal success is at least their maximum. Conversely,
Lemma~\ref{lem:product-measurement} and convexity of a pointwise maximum give,
for any measurement on $K$,
\begin{align*}
 \sum_y\max_{i\in H}e_y(s_i)
 &\le\sum_\alpha\lambda_\alpha
       \sum_y\max_{i\in H}e_y^{(\alpha)}(p_i^{(\alpha)})\\
 &\le\sum_\alpha\lambda_\alpha\theta_{c^{(\alpha)}}(H)
 \le\max_\alpha\theta_{c^{(\alpha)}}(H).
\end{align*}
The middle inequality uses the optimality of direct observation in a classical
simplex. This proves the converse in Theorem~\ref{thm:envelope}. Notice that
$K$ is a \emph{Cartesian product of state spaces}, not a classical joint
probability simplex: its affine effects do not allow simultaneous readout of
all coordinates.

The construction gives the explicit, generally nonoptimal dimension bound
\begin{equation}\label{eq:dimension-bound}
 \dim K=\sum_\alpha(|X_\alpha|-1)
 \le(2^n-n-1)(2^n-2).
\end{equation}
Its qualitative specialization explains the distinction between unrestricted
realizability and economical realizability. Let $\mathcal A$ be a downward-
closed family of subsets of $[n]$ containing all singletons, and let
$\mathcal M$ be its maximal members. For each $M\in\mathcal M$, use a simplex
with vertices $(v_i)_{i\in M}$; assign $v_i$ to hypothesis $i\in M$ and its
barycenter to every $i\notin M$. A subfamily with at least two members is
perfectly distinguishable in this coordinate exactly when it is contained in
$M$. The product lemma then shows that the zero set of its full error profile
is precisely $\mathcal A$.

This recovers the existence conclusion of \cite{Weiner2025}, but its dimension
is $\sum_{M\in\mathcal M}(|M|-1)$, potentially exponentially larger than the
$n-1$ dimensions of the construction in that work. Dimension-efficient
realization remains a substantive geometric problem. Cartesian products and
related compression questions are also central in the study of higher rank
antipodality by Nasz\'odi, Szil\'agyi, and Weiner
\cite{NaszodiSzilagyiWeiner2025}. For quantitative profiles, a natural refinement
is to minimize the dimension needed to reproduce all values of $F$, not only
its zero set.

\subsection{A finite linear program and the outer polytope}

\begin{theorem}[Supporting linear program]\label{thm:LP}
Let $F(\varnothing)=F(\{i\})=0$. Then $F\in\Greg_n$ if and only if for each
$H$ with $|H|\ge2$ there exists $c^H\in\Coeff_n$ such that
\begin{equation}\label{eq:LP}
 E_{c^H}(K)\ge F(K)\quad(|K|\ge2),\qquad E_{c^H}(H)=F(H).
\end{equation}
In particular, $\Greg_n$ is a bounded rational polytope.
\end{theorem}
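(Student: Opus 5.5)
The plan is to derive the statement from the equivalence (i)$\Leftrightarrow$(iii) of Theorem~\ref{thm:envelope}. For the forward direction, suppose $F\in\Greg_n$. Then the necessity argument already gives, for each $H$ with $|H|\ge2$, a vector $c^H\in\Coeff_n$: it is the coefficient vector of an optimal measurement for $H$, as in Proposition~\ref{prop:one-measurement}. This vector satisfies $\theta_{c^H}\le D$ everywhere with equality at $H$. Translating to error coordinates via $E_c(K)=|K|-\theta_c(K)$ yields \eqref{eq:LP}.

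For the converse, take the vectors $c^H$ provided by \eqref{eq:LP} and consider $\min_H E_{c^H}(K)$. This minimum is at least $F(K)$ by the inequalities, and at most $F(K)$ by using the index $H=K$ whenever $|K|\ge2$. For $|K|\le1$, every $E_c$ vanishes, so it agrees with $F(\varnothing)=F(\{i\})=0$. Hence $F=\min_H E_{c^H}$, which is condition (iii), and Theorem~\ref{thm:envelope} gives realizability in a finite-dimensional GPT.

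For the polytope claim, I would introduce the variables $F\in\R^{\mathcal I_n}$ and $c^H\in\R^{2^n-1}$, one vector for each $H\in\mathcal I_n$. The constraints of $\Coeff_n$ are linear. The function $E_c(K)=\sum_B c_B(|B\cap K|-1)_+$ is linear in $c$ with integer coefficients. Therefore \eqref{eq:LP} is a finite system of linear equalities and inequalities with rational (indeed integer) coefficients in the joint variables. Its feasible set $P$ is a rational polyhedron.

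By the equivalence just proved, $\Greg_n$ is the coordinate projection of $P$ onto the $F$-variables. Projections of rational polyhedra are rational polyhedra, by Fourier--Motzkin elimination. Boundedness follows from \eqref{eq:normalization}, which gives $0\le F(H)\le|H|-1$. Alternatively, $P$ itself is bounded: each $c^H$ lies in the bounded set $\Coeff_n$, and $F(H)=E_{c^H}(H)$ is then bounded. A bounded rational polyhedron is a rational polytope.

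I expect no serious obstacle. The only delicate points are bookkeeping ones. First, the singleton and empty coordinates must be handled separately, since they are excluded from $\mathcal I_n$ but are automatically matched. Second, one must note that the projection is exactly $\Greg_n$ with no closure. This holds because both directions of the equivalence are exact, and $P$ is closed and bounded, so its projection is closed anyway.
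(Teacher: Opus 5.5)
Your proposal is correct and follows essentially the same route as the paper. You take the supporting vectors from an optimal measurement for $H$, while the paper picks a minimizing index in the envelope of Theorem~\ref{thm:envelope}, which amounts to the same thing; the converse via $F=\min_{|H|\ge2}E_{c^H}$ and the projection of the bounded, integer-coefficient joint feasible set then match the paper, and your remarks on the singleton coordinates and the absence of any closure are accurate bookkeeping that the paper leaves implicit.
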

\begin{proof}
For an envelope, choose a minimizing coefficient vector at $H$. Conversely,
\eqref{eq:LP} gives $F(K)=\min_{|H|\ge2}E_{c^H}(K)$, including the fixed
singleton and empty-set coordinates. Theorem~\ref{thm:envelope} applies.
All the coefficient constraints and \eqref{eq:LP} are linear with integer
coefficients. In their joint feasible set every $c_B^H$ lies in $[0,1]$, and
each $F(H)$ equals an $E_{c^H}(H)\in[0,|H|-1]$. Thus this is a bounded rational
polyhedron. Its projection onto the $F$-coordinates is a rational polytope.
\end{proof}

The theorem characterizes all universal \emph{linear} discrimination
inequalities as the valid inequalities of an explicit finite polytope. It also
supplies a realization certificate, rather than merely an outer relaxation.
The number of variables grows exponentially; no claim of polynomial complexity
in $n$ is intended.

\begin{proposition}[General overlap inequality]\label{prop:overlap}
For arbitrary $A,B\subseteq[n]$,
\begin{equation}\label{eq:overlap}
 F(A\cup B)\ge F(A)+F(B)-(|A\cap B|-1)_+.
\end{equation}
\end{proposition}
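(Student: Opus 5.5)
Two routes suggest themselves. One is to generalize the proof of Proposition~\ref{prop:one-point} directly. The other is to use Theorem~\ref{thm:envelope} and reduce to a pointwise inequality for coverage functions. I would take the second, since it isolates the combinatorics cleanly.

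\emph{Reduction.} If $A\cap B=\varnothing$, the right-hand side is $F(A)+F(B)$. The claim then follows by reusing an optimal measurement for $A\cup B$: relabel the outcomes in $B$ to some fixed $a\in A$, and symmetrically relabel those in $A$ to a fixed $b\in B$. The two relabeled success sums add up to $D(A\cup B)+e_B(s_a)+e_A(s_b)\le D(A\cup B)+2$, which gives $F(A)+F(B)\le F(A\cup B)$ after subtracting from $|A|+|B|=|A\cup B|$. So assume $A\cap B\neq\varnothing$, in which case the correction is $|A\cap B|-1$.

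By Theorem~\ref{thm:envelope}, choose $c\in\Coeff_n$ with $E_c(A\cup B)=F(A\cup B)$ and $E_c\ge F$ everywhere. It then suffices to prove, for every single $c\in\Coeff_n$,
\[
 E_c(A\cup B)\ge E_c(A)+E_c(B)-(|A\cap B|-1),
\]
because the right-hand side is at least $F(A)+F(B)-(|A\cap B|-1)$. By the formula~\eqref{eq:E-c}, $E_c$ is a nonnegative combination of the functions $H\mapsto(|C\cap H|-1)_+$, one for each block $C$. The correction term, however, is not weighted by $c$, so I would use the marginal identity to rewrite it. Summing $\sum_{C\ni i}c_C=1$ over $i\in I:=A\cap B$ gives $|I|-1=\sum_C c_C|C\cap I|-1$. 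The constant $1$ is awkward because $\sum_C c_C$ is not normalized.

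Equivalently, I would work in success coordinates: the target is $\theta_c(A)+\theta_c(B)\le\theta_c(A\cup B)+\theta_c(A\cap B)+(|I|-\theta_c(I))-(|I|-1)$. Since $\theta_c(I)\ge1$ for nonempty $I$, the extra terms $1-\theta_c(I)+\dots$ need care. Cleaner still: submodularity of coverage functions gives $\theta_c(A)+\theta_c(B)\ge\theta_c(A\cup B)+\theta_c(A\cap B)$, which points the wrong way for us. So instead I would use $\theta_c(A)+\theta_c(B)-\theta_c(A\cup B)=\sum_{C}c_C\,\1[C\cap A\ne\varnothing,\ C\cap B\ne\varnothing]$, and the goal becomes
\[
 \sum_C c_C\,\1[C\cap A\ne\varnothing\ne C\cap B]\le 1+(|I|-1)=|I|
\]
in success form; that is, $F(A)+F(B)-F(A\cup B)\le |I|-1$ is exactly $\sum_C c_C\1[\cdots]\ge 1$. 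So I need the \emph{lower} bound $\sum_C c_C\,\1[C\cap A\neq\varnothing\neq C\cap B]\ge1$. This holds because every $C$ containing a fixed $j\in I$ meets both $A$ and $B$, and $\sum_{C\ni j}c_C=1$.

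\emph{Assembly.} Translating back: $E_c(A)+E_c(B)-E_c(A\cup B)=|I|-\sum_C c_C\1[\cdots]\le|I|-1$. Combined with $E_c(A\cup B)=F(A\cup B)$ and $E_c\ge F$ on $A,B$, this yields~\eqref{eq:overlap}. The main obstacle is keeping the signs and the unweighted constant straight. I expect it to be harmless, since the indicator sum is dominated from below by the star at a single point of $I$, exactly mirroring the relabeling argument of Proposition~\ref{prop:one-point}.
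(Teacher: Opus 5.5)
Your final argument is correct and is essentially the paper's: choose a supporting $E_c$ at $A\cup B$ via Theorem~\ref{thm:envelope}, then prove the inequality for a single coverage function. Your bound $\theta_c(A)+\theta_c(B)-\theta_c(A\cup B)=\sum_C c_C\1[C\cap A\ne\varnothing\ne C\cap B]\ge\sum_{C\ni j}c_C=1$ merges the paper's two steps: submodularity of $\theta_c$ (which you dismissed as pointing the wrong way, but which gives exactly $\ge\theta_c(A\cap B)$), and $\theta_c(A\cap B)\ge1$.

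The separate measurement argument for disjoint $A,B$ is unnecessary, because your identity already gives the claim there by nonnegativity of the indicator sum. As written, it also cites the irrelevant bound $\le D(A\cup B)+2$, when the step actually needed is $e_B(s_a)+e_A(s_b)\ge0$.
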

\begin{proof}
Each coverage function is submodular, so its error function $E_c$ is
supermodular. Choose a supporting $E_c$ with $E_c(A\cup B)=F(A\cup B)$.
Then
\[
 F(A\cup B)\ge E_c(A)+E_c(B)-E_c(A\cap B)
 \ge F(A)+F(B)-(|A\cap B|-1)_+.
\]
Here $E_c(I)\le(|I|-1)_+$ for all $I$. A support can also be chosen when the
union has fewer than two elements, since all clauses agree there.
\end{proof}

Unlike its individual supporting functions, a GPT error profile need not itself
be supermodular. The controlled loss in \eqref{eq:overlap} is therefore
important; simply taking minima of classical supermodularity inequalities
would not prove supermodularity of $F$.

\section{The complete three-state geometry}\label{sec:three}

Put
\begin{equation}\label{eq:three-coordinates}
 x=F(12),\quad y=F(13),\quad z=F(23),\quad t=F(123).
\end{equation}
The remaining coordinates are fixed by normalization.

\begin{theorem}[The GPT polytope for three states]\label{thm:G3}
The region $\Greg_3\subset\R^4$ has the following irredundant facet description:
\begin{align}
 x,y,z&\ge0,\label{eq:G3-positive}\\
 t&\ge x+y,\quad t\ge x+z,\quad t\ge y+z,\label{eq:G3-lower}\\
 t&\le1+x,\quad t\le1+y,\quad t\le1+z.\label{eq:G3-upper}
\end{align}
Its vertices are
\begin{align*}
 O&=(0,0,0,0),& X&=(1,0,0,1),&Y&=(0,1,0,1),\\
 Z&=(0,0,1,1),& W&=(1,1,1,2),&P&=(0,0,0,1).
\end{align*}
\end{theorem}
\begin{proof}
Nonnegativity, one-point gluing, and the increment bounds prove necessity.
To prove sufficiency, let $(x,y,z,t)$ satisfy the displayed inequalities.
Choose
\[
 \max\{0,x+y+z-t,t-1\}\le w\le\min\{x,y,z\}.
\]
This interval is nonempty: the lower-sum inequalities bound $x+y+z-t$ by each
of $x,y,z$, and the upper inequalities do the same for $t-1$. The coefficients
\begin{align*}
 \lambda_W&=w,&\lambda_X&=x-w,&\lambda_Y&=y-w,&\lambda_Z&=z-w,\\
 \lambda_P&=t-x-y-z+w,&\lambda_O&=1-t+w
\end{align*}
are nonnegative, sum to one, and give the stated point as a convex combination
of $O,X,Y,Z,W,P$.

The first five points have classical realizations: $O$ represents three
distinct point masses; $X$ represents equal first and second states supported
apart from the third; $Y,Z$ are its permutations; and $W$ represents three
identical states. For $P$, use the square $[0,1]^2$ and states
$(0,0),(1,0),(0,1)$. Every pair differs in one coordinate and is perfectly
distinguishable. Each binary coordinate has triple success at most two, and
Lemma~\ref{lem:product-measurement} shows that no affine measurement on the
square can do better. Coordinate readout attains two, hence its profile is $P$.
Convexity now proves sufficiency.

Each listed point is a vertex, as can be checked from four independent active
constraints. The polytope is four-dimensional. For facetness, representative
facets $x=0$, $t=x+y$, and $t=1+x$ contain, respectively,
\[
 \{O,Y,Z,P\},\qquad\{O,X,Y,W\},\qquad\{Y,Z,P,W\},
\]
four affinely independent points each. Permuting $x,y,z$ proves that all nine
inequalities define facets.
\end{proof}

\begin{theorem}[The classical polytope for three states]\label{thm:C3}
The classical region is
\begin{equation}\label{eq:C3-cut}
 \Creg_3=\Greg_3\cap\{t\le x+y+z\}.
\end{equation}
Its seven facets are the six inequalities
\eqref{eq:G3-lower}--\eqref{eq:G3-upper} and $t\le x+y+z$.
Its vertices are $O,X,Y,Z,W$ and
\begin{equation}\label{eq:M-vertex}
 M=\left(\frac12,\frac12,\frac12,\frac32\right).
\end{equation}
\end{theorem}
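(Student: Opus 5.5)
The plan is to work in coverage coordinates and apply Corollary~\ref{cor:classical-region} with $n=3$. For $c\in\Coeff_3$, with coefficients $c_1,c_2,c_3,c_{12},c_{13},c_{23},c_{123}$, formula \eqref{eq:E-c} gives
\[
 x=c_{12}+c_{123},\quad y=c_{13}+c_{123},\quad z=c_{23}+c_{123},\quad
 t=c_{12}+c_{13}+c_{23}+2c_{123}.
\]
The marginal constraints then express the singleton coefficients. For example, $c_1=1-c_{12}-c_{13}-c_{123}$.

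The first step is to invert this linear map explicitly, which is Proposition~\ref{prop:all-classical-facets} specialised to $n=3$:
\[
 c_{123}=x+y+z-t,\qquad c_{12}=t-y-z,\quad c_{13}=t-x-z,\quad c_{23}=t-x-y,
\]
\[
 c_1=1-t+z,\qquad c_2=1-t+y,\qquad c_3=1-t+x .
\]
Since the map is an affine isomorphism, $\Creg_3$ is exactly the set where these seven expressions are nonnegative. Six of them are the inequalities \eqref{eq:G3-lower}--\eqref{eq:G3-upper}, and the seventh is $t\le x+y+z$. Proposition~\ref{prop:all-classical-facets} already says that all seven are distinct facets.

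For \eqref{eq:C3-cut}, I would check that the nonnegativity constraints \eqref{eq:G3-positive} of $\Greg_3$ are implied by the seven classical inequalities, since $x=c_{12}+c_{123}\ge0$ and similarly for $y,z$. Hence intersecting the facet description of Theorem~\ref{thm:G3} with $t\le x+y+z$ gives exactly the seven-inequality description above.

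For the vertices, I would use the same affine isomorphism to transfer the problem to the vertices of $\Coeff_3$. A vertex of $\Coeff_3$ has at most three nonzero coordinates, and the corresponding columns of the $3\times7$ incidence matrix (element $i$ versus set $B\ni i$) must be linearly independent. I would enumerate the supports with exact partitions or independent covers. These give:
\begin{itemize}
\item all singletons, which is $O$;
\item $\{12\},\{3\}$ and its permutations, which are $X,Y,Z$;
\item $\{123\}$, which is $W$;
\item $\{12\},\{13\},\{23\}$ with all values $\tfrac12$, which is $M$.
\end{itemize}
Mixed supports such as $\{12\},\{13\},\{2\}$ either force a negative value or are dependent; for instance, $\{12\},\{3\}$ with $\{1\}$ forces $c_1=0$. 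Mapping these supports back through the formulas for $x,y,z,t$ yields the listed points.

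The main obstacle is making this support enumeration complete and clean. As an alternative check, I would verify directly that each of the six listed points has four independent tight inequalities among the seven. I would then confirm that no other vertex exists by noting that $\Creg_3$ is cut from $\Greg_3$ by one hyperplane $t=x+y+z$. Under that view, the new vertices lie on edges of $\Greg_3$ joining $P$, the only vertex with $t>x+y+z$, to $O,X,Y,Z,W$. Intersecting those edges with the hyperplane should give only $M$, on the edge $PW$ at $\tfrac12(P+W)$, plus vertices already listed.
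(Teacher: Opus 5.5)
Your proposal is correct. The description of $\Creg_3$ and the facet claim follow the paper: you invert the coverage map to get \eqref{eq:inverse-coeff}, apply Corollary~\ref{cor:classical-region}, and note that $x=c_{12}+c_{123}\ge0$ (and similarly for $y,z$), so intersecting with Theorem~\ref{thm:G3} adds nothing. For facetness you cite Proposition~\ref{prop:all-classical-facets}. The paper's remark after the theorem endorses that shortcut, although its proof instead exhibits four affinely independent vertices on each facet.

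The genuine difference is the vertex enumeration. Your primary route transports the problem through the affine isomorphism to $\Coeff_3$ and lists its basic feasible solutions:
\begin{itemize}
\item the integral exact covers $\{1\},\{2\},\{3\}$, then $\{ij\},\{k\}$, then $\{123\}$;
\item the half-integral triangle $\{12\},\{13\},\{23\}$.
\end{itemize}
This is self-contained and does not need the vertex or edge structure of $\Greg_3$. It also gives a classical realization of each vertex directly.

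The paper uses what you list as your alternative check. The hyperplane $t=x+y+z$ contains $O,X,Y,Z$ and separates $P$ from $W$. Hence only the edge $PW$, which is the intersection of the three upper facets, is cut, and it is cut at $M$. Given Theorem~\ref{thm:G3}, this is shorter and explains geometrically why $M$ is the only new vertex. It does, however, require that $PW$ actually be an edge, which the paper checks; alternatively, your direct verification that $M$ has four independent tight constraints suffices.

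One small correction to your enumeration. The support $\{12\},\{13\},\{2\}$ is neither dependent nor infeasible: it is a basis whose solution is $c_{12}=0$ and $c_{13}=c_2=1$. This is a degenerate basis reproducing $Y$. The accurate statement is that every feasible basis yields one of the six listed points, with degenerate bases reproducing $O,X,Y,Z,W$. Bases such as $\{1\},\{12\},\{13\}$, by contrast, are infeasible, since they force $c_1=-1$.
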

\begin{proof}
For a classical profile, the seven coverage coefficients are necessarily
\begin{align}
 c_{123}&=x+y+z-t,\nonumber\\
 c_{12}&=t-y-z,&c_{13}&=t-x-z,&c_{23}&=t-x-y,\label{eq:inverse-coeff}\\
 c_1&=1+z-t,&c_2&=1+y-t,&c_3&=1+x-t.\nonumber
\end{align}
Conversely, these formulas satisfy all three marginal equalities, and their
nonnegativity is exactly the seven claimed inequalities. The classical
characterization in Corollary~\ref{cor:classical-region} proves the result.
These inequalities already imply $x,y,z\ge0$, since, for example,
$t\ge y+z$ and $t\le x+y+z$ imply $x\ge0$.

For the vertex description, the cutting hyperplane passes through $O,X,Y,Z$;
$P$ lies strictly above it and $W$ strictly below it. Therefore the only possible
new vertex comes from cutting the edge $PW$. This is indeed an edge, being the
intersection of the three upper facets; it is cut at its midpoint $M$.
All five retained GPT vertices remain extreme. Equivalently, $M$ has the
classical realization $c_{12}=c_{13}=c_{23}=1/2$ with all other coefficients
zero. The new facet contains $O,X,Y,Z$. The lower facets contain the independent
sets used above; for the upper facets, replacing $P$ by $M$ in those sets
preserves affine independence. Hence all seven inequalities are facets.
\end{proof}

\begin{remark}
	The facet description in Theorem~\ref{thm:C3} is also an immediate specialization of Proposition~\ref{prop:all-classical-facets}. Indeed, for $n=3$ the seven inequalities $\widehat c_B(D)\ge0$, $\varnothing\ne B\subseteq[3]$, are precisely the nonnegativity conditions displayed in \eqref{eq:inverse-coeff}. Six of them are already valid throughout $\Greg_3$ by Theorem~\ref{thm:G3}; the only additional classical constraint is
	
	$$
	\widehat c_{123}=x+y+z-t\ge0,
	$$
	or equivalently $\beta\le0$. We have retained the direct calculation above because it makes the three-state geometry and the subsequent vertex description transparent.
\end{remark}

\begin{corollary}[A complete three-state profile witness]\label{cor:unique-facet}
For a GPT profile on three hypotheses, define
\begin{equation}\label{eq:beta}
 \beta=t-x-y-z.
\end{equation}
The profile is classical if and only if $\beta\le0$. The maxima over the
classical and GPT regions are, respectively, $0$ and $1$; the latter is
attained only at $P$.
\end{corollary}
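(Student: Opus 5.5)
The corollary follows almost directly from Theorems~\ref{thm:G3} and~\ref{thm:C3}; the plan is to read it off from the two polytope descriptions and then do one small linear optimization.

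\emph{Step 1: characterization.} Let $F$ be a GPT profile, so $(x,y,z,t)\in\Greg_3$. By Theorem~\ref{thm:C3}, $\Creg_3=\Greg_3\cap\{t\le x+y+z\}$. Hence $F$ is classical exactly when $t\le x+y+z$, that is, when $\beta\le0$. No further argument is needed beyond quoting the cut description.

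\emph{Step 2: classical maximum.} On $\Creg_3$ we have $\beta\le0$ by Step~1. Equality holds at $O$, where all coordinates vanish, so the classical maximum is $0$.

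\emph{Step 3: GPT maximum and uniqueness.} Since $\beta$ is linear and $\Greg_3$ is the polytope with vertices $O,X,Y,Z,W,P$, I would evaluate $\beta$ at the vertices:
\[
 \beta(O)=0,\quad \beta(X)=\beta(Y)=\beta(Z)=0,\quad \beta(W)=2-3=-1,\quad \beta(P)=1.
\]
So the maximum is $1$, and it is attained at $P$. For uniqueness, write any point as a convex combination $\sum\lambda_V V$ over the six vertices, so that $\beta=\sum\lambda_V\beta(V)$. This equals $1$ only if $\lambda_P=1$, because every other vertex has $\beta\le0<1$. Hence the maximizer is $P$ alone.

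As a cross-check I could avoid the vertex list. Adding $t\le1+x$ to the inequality $x+y+z\ge\dots$ is not quite direct, but the following works: from $t\le 1+x$ and $y,z\ge0$ we get $\beta\le 1-y-z\le1$. Equality forces $y=z=0$ and $t=1+x$. By symmetry of the argument, using $t\le1+y$ and $x,z\ge0$, equality also forces $x=z=0$. So $x=y=z=0$ and $t=1$, which is $P$.

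There is no real obstacle here. The only point needing care is that uniqueness needs every vertex other than $P$ to have $\beta<1$, and the direct inequality argument above gives this cleanly.
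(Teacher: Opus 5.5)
Your proposal is correct and follows the paper's route: the characterization comes straight from the cut description $\Creg_3=\Greg_3\cap\{t\le x+y+z\}$ in Theorem~\ref{thm:C3}, and the maxima and uniqueness come from evaluating $\beta$ at the vertices $O,X,Y,Z,W,P$ of $\Greg_3$ listed in Theorem~\ref{thm:G3}. Your cross-check via $t\le1+x$, $t\le1+y$ and nonnegativity is also valid; it obtains the bound and uniqueness from the universal inequalities alone, while attainment still relies on the realization of $P$.
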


In \eqref{eq:inverse-coeff}, every coefficient except $c_{123}=-\beta$ is
nonnegative throughout $\Greg_3$. The nonclassical facet therefore detects
exactly a negative three-way overlap in the formal classical representation.
This is a statement about the \emph{profile}: a noncommuting quantum family may
still have a classical profile because another family of classical states
reproduces all of these finitely many numbers.

\begin{figure}[t]
\centering
\begin{tikzpicture}[x=8cm,y=2.7cm,font=\small]
 \fill[gray!12] (0,0)--(0,1)--(1,2)--cycle;
 \fill[gray!38] (0,0)--(.5,1.5)--(1,2)--cycle;
 \draw[thick] (0,0)--(0,1)--(1,2)--cycle;
 \draw[dashed,thick] (0,0)--(.5,1.5);
 \draw[->] (0,0)--(1.12,0) node[right] {$s$};
 \draw[->] (0,0)--(0,2.18) node[above] {$t$};
 \draw (.5,.035)--(.5,-.035) node[below] {$\tfrac12$};
 \draw (1,.035)--(1,-.035) node[below] {$1$};
 \draw (.012,1)--(-.012,1);
 \draw (.012,2)--(-.012,2) node[left] {$2$};
 \fill (0,0) circle[radius=1.7pt] node[below left] {$O$};
 \fill (0,1) circle[radius=1.7pt] node[left] {$P$};
 \fill (.5,1.5) circle[radius=1.7pt] node[above left] {$M$};
 \fill (1,2) circle[radius=1.7pt] node[above right] {$W$};
 \node at (.65,1.47) {$\Creg_3$};
 \node at (.10,.64) {$\Greg_3\setminus\Creg_3$};
 \fill (0.1339745962,1) circle[radius=2pt] node[right=3pt] {trine};
\end{tikzpicture}
\caption{The symmetric slice $x=y=z=s$. The GPT region is the triangle $OPW$;
its classical part is the darker triangle $OMW$. The dashed segment $OM$ is
$t=3s$, or $\beta=0$. The trine has $s=1-\sqrt3/2$ and $t=1$.
No boundary of the quantum region is asserted in this picture.}
\label{fig:slice}
\end{figure}

\section{Quantum violation and dimension-independent bounds}\label{sec:quantum}

\subsection{Operator formulation}

For density matrices $\rho_i$, denote the unnormalized optimal success by
$\Sopt_H=D(H)$. Its primal and dual semidefinite programs are
\begin{equation}\label{eq:quantum-SDP}
 \Sopt_H
 =\max_{\substack{M_i\ge0\ (i\in H)\\\sum_{i\in H}M_i=\id}}
       \sum_{i\in H}\Tr(\rho_iM_i)
 =\min_{\substack{Y=Y^*\\Y\ge\rho_i\ (i\in H)}}\Tr Y.
\end{equation}
Both are strictly feasible, so strong duality holds. For background see
\cite{Helstrom1976,BaeKwek2015}. For a pair, the Helstrom formula gives
\begin{equation}\label{eq:Helstrom}
 F(ij)=1-\delta_{ij},\qquad
 \delta_{ij}:=\frac12\|\rho_i-\rho_j\|_1.
\end{equation}
For a triple, abbreviate $\Sopt=\Sopt_{123}$ and put
$T=\delta_{12}+\delta_{13}+\delta_{23}$. Then
\begin{equation}\label{eq:beta-operator}
 \beta=T-\Sopt,\qquad 0\le T\le3.
\end{equation}
We write $\beta_{\mathrm Q}$ for the supremum of this quantity over all triples
in arbitrary finite dimensions.

\subsection{Qubits and the trine}

\begin{proposition}[The exact qubit value]\label{prop:qubit}
For arbitrary qubit states,
\begin{equation}\label{eq:qubit-bound}
 \beta\le\btr=\frac{3\sqrt3}{2}-2.
\end{equation}
Equality holds precisely for a pure trine, up to unitary conjugation and
permutation of the states.
\end{proposition}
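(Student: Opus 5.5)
The plan is to reduce $\beta$ for qubits to an elementary inequality about triangles inscribed in the unit Bloch ball.

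First, write $\rho_i=\tfrac12(\id+r_i\cdot\sigma)$ with Bloch vectors $|r_i|\le1$. Then the Helstrom formula \eqref{eq:Helstrom} gives $\delta_{ij}=\tfrac12|r_i-r_j|$. So $T$ is half the perimeter of the (possibly degenerate) triangle $r_1r_2r_3$.

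Second, evaluate $\Sopt$ with the dual SDP in \eqref{eq:quantum-SDP}. Every Hermitian $Y$ has the form $Y=\tfrac12(y_0\id+y\cdot\sigma)$, with $\Tr Y=y_0$. The constraint $Y\ge\rho_i$ says that $\tfrac12\bigl((y_0-1)\id+(y-r_i)\cdot\sigma\bigr)\ge0$, which holds if and only if $y_0-1\ge|y-r_i|$. Hence
\[
\Sopt=1+R,\qquad R:=\min_{y\in\R^3}\max_i|y-r_i|,
\]
where $R$ is the radius of the smallest ball enclosing the three Bloch vectors. Therefore
\[
\beta=\tfrac12(a+b+c)-1-R,
\]
where $a,b,c$ are the side lengths. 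The smallest enclosing ball of three points is centered in their affine plane, so $R$ is the minimal enclosing circle radius of the planar triangle. That plane meets the unit ball in a disk of radius at most $1$, so $R\le1$.

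Third, split into cases by the shape of the triangle.
\begin{itemize}
\item Non-obtuse triangle. Here $R$ is the circumradius, and $a=2R\sin A$ etc. Concavity of $\sin$ on $[0,\pi]$ (Jensen) gives $\sin A+\sin B+\sin C\le 3\sqrt3/2$. Hence
\[
\beta\le R\bigl(\tfrac{3\sqrt3}{2}-1\bigr)-1\le\tfrac{3\sqrt3}{2}-2,
\]
using $R\le1$ and that the coefficient is positive.
\item Obtuse or degenerate triangle, including coincident points. Here $R=c/2$ for the longest side $c$. Since the angle opposite $c$ is at least $\pi/2$, we have $a^2+b^2\le c^2$, so $a+b\le\sqrt2\,c$. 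Then
\[
\beta=\tfrac{a+b}{2}-1\le\tfrac{c}{\sqrt2}-1\le\sqrt2-1<\tfrac{3\sqrt3}{2}-2,
\]
using $c\le2$.
\end{itemize}

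Equality can therefore only occur in the non-obtuse case. It requires $R=1$ together with equality in Jensen, i.e.\ $A=B=C=\pi/3$. $R=1$ forces the circumcircle to be a great circle of the Bloch sphere, so all three states are pure. Combined with $A=B=C$, the vectors form an equilateral triangle on a great circle, which is a pure trine up to rotation (unitary conjugation) and relabeling. Conversely, the trine has $T=3\cdot\tfrac{\sqrt3}{2}$ and $\Sopt=2$, which attains the bound.

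The main obstacle is the second step: justifying that the dual SDP value is exactly $1+R$. This needs the positivity criterion for $2\times2$ Hermitian matrices ($\lambda_{\min}=\tfrac12(y_0-1-|y-r_i|)$), strong duality as already noted in the paper, and care that the minimizing center can be taken in the plane of the three vectors. After that, the rest is classical triangle geometry.
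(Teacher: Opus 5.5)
Your proposal is correct and follows the paper's proof essentially step for step: Bloch vectors giving $T=L/2$, the dual SDP giving $\Sopt=1+R$, the acute/obtuse case split with Jensen for $\sin$, and $R\le1$. The only difference is in the equality case. There the paper forces purity by writing the equilateral vertices as $\vec c+\vec u_i$ with $\sum_i\vec u_i=0$, $\|\vec u_i\|=1$, and noting that their average squared norm is $\|\vec c\|^2+1\le1$; your ``$R=1$ forces a great circle'' step tacitly uses uniqueness of the minimal enclosing disk, which is standard but worth stating explicitly.
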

\begin{proof}
Use the Bloch representation
\[
 \rho_i=\frac12(\id+\vec r_i\cdot\vec\sigma),\qquad
 \|\vec r_i\|\le1,
\]
where $\vec\sigma=(\sigma_x,\sigma_y,\sigma_z)$ is the vector of Pauli matrices
and $\|\cdot\|$ is the Euclidean norm. If $L$ is the perimeter of the triangle
with vertices $\vec r_1,\vec r_2,\vec r_3$, then
$\|\rho_i-\rho_j\|_1=\|\vec r_i-\vec r_j\|$, so $T=L/2$.

To evaluate the dual in \eqref{eq:quantum-SDP}, write
$Y=(a\id+\vec b\cdot\vec\sigma)/2$. Its constraints are exactly
$a-1\ge\|\vec b-\vec r_i\|$, while $\Tr Y=a$. Thus, if $R$ is the radius of
the smallest enclosing ball of the three Bloch vectors,
\begin{equation}\label{eq:qubit-geometry}
 \Sopt=1+R,\qquad\beta=\frac L2-R-1.
\end{equation}

We claim $L\le3\sqrt3R$, with equality for a nondegenerate triangle only when
it is equilateral. For an acute triangle, the smallest enclosing circle is its
circumcircle. If its interior angles are $A,B,C$, then
\[
 L=2R(\sin A+\sin B+\sin C)\le3\sqrt3R
\]
by concavity of sine and $A+B+C=\pi$. For a right or obtuse triangle, its longest
side has length $a=2R$. The remaining side lengths satisfy $b^2+c^2\le a^2$,
so
\[
 L=a+b+c\le(1+\sqrt2)a=2(1+\sqrt2)R<3\sqrt3R.
\]
For collinear points $L=4R$; coincident points cause no difficulty. This
separation of cases avoids identifying the circumradius with the enclosing
radius when the triangle is obtuse.

Since the Bloch ball itself contains the three points, $R\le1$. Therefore
\[
 \beta\le\left(\frac{3\sqrt3}{2}-1\right)R-1
 \le\frac{3\sqrt3}{2}-2.
\]
Equality requires an equilateral triangle of circumradius one. Write its
vertices as $\vec c+\vec u_i$, where $\sum_i\vec u_i=0$ and
$\|\vec u_i\|=1$. Their average squared norm is $\|\vec c\|^2+1\le1$,
forcing $\vec c=0$. Hence the three states are pure and form a trine.
Conversely, for a pure trine $R=1$ and $L=3\sqrt3$, so both inequalities
are equalities.
\end{proof}

For later use, fix trine states $\tau_i$ with Bloch vectors
\begin{equation}\label{eq:trine-vectors}
 \vec r_1=(1,0,0),\quad
 \vec r_2=(-1/2,\sqrt3/2,0),\quad
 \vec r_3=(-1/2,-\sqrt3/2,0).
\end{equation}
Their pairwise trace distances are $\delta_{ij}=\sqrt3/2$, and
\eqref{eq:qubit-geometry} gives $\Sopt=1+R=2$. Thus
\begin{equation}\label{eq:trine-profile}
 F(ij)=1-\frac{\sqrt3}{2},\qquad F(123)=1,\qquad
 \beta=\btr>0.
\end{equation}
The trine profile is therefore nonclassical.

\subsection{An elementary strict quantum--GPT separation}

A higher-dimensional quantum triple cannot approach the GPT maximum $1$.
The following proof uses only a column-operator construction and
Hilbert--Schmidt and Schatten inequalities.

\begin{theorem}[A dimension-independent quadratic bound]\label{thm:two-thirds}
For every triple of density matrices,
\begin{equation}\label{eq:quadratic-bound}
 \beta\le q(T):=T-1-\frac{4T^2}{27}\le\frac23.
\end{equation}
\end{theorem}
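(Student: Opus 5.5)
The plan is to reduce the statement to a lower bound on the optimal triple success. By \eqref{eq:beta-operator}, $\beta=T-\Sopt$, so the first inequality in \eqref{eq:quadratic-bound} is equivalent to
\[
 \Sopt\ge 1+\frac{4T^2}{27}.
\]
The second inequality is elementary. Since $q'(T)=1-8T/27>0$ on $[0,3]$, $q$ is increasing there, and $q(3)=3-1-\tfrac{4}{3}=\tfrac23$. Everything therefore rests on the displayed bound.

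To get it, I will work with an optimal dual variable $Y$ in \eqref{eq:quantum-SDP}. Set $P_i:=Y-\rho_i\ge0$ and $s:=\Sopt-1$, so that $\Tr P_i=\Tr Y-1=s$ for every $i$. The key identity is $\rho_i-\rho_j=P_j-P_i$. Hence $\delta_{ij}=\tfrac12\|P_i-P_j\|_1$, and the triangle inequality alone only gives the linear bound $\delta_{ij}\le s$. That recovers $\Sopt\ge1+\max\delta_{ij}$, which is not enough. The quadratic gain must come from the fact that both $P_i$ and $\rho_i$ sit below the same $Y$.

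For this I will use the column-operator construction. Assume first that $Y$ is invertible on the relevant support, and write $P_i=Y^{1/2}Q_iY^{1/2}$ with $0\le Q_i\le\id$ (possible because $0\le P_i\le Y$). Then
\[
 \|P_i-P_j\|_1=\bigl\|Y^{1/2}(Q_i-Q_j)Y^{1/2}\bigr\|_1
 \le\bigl\|Y^{1/2}(Q_i-Q_j)\bigr\|_2\,\|Y^{1/2}\|_2
\]
by Hölder, and the right-hand side equals $\sqrt{\Tr\bigl(Y(Q_i-Q_j)^2\bigr)}\,\sqrt{\Tr Y}$. I will estimate all three pairs at once by applying Cauchy--Schwarz over the pairs: $T^2\le 3\sum_{i<j}\delta_{ij}^2$. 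I then view $(Y^{1/2}Q_1,Y^{1/2}Q_2,Y^{1/2}Q_3)$ as a column operator and use the Hilbert--Schmidt identity
\[
 \sum_{i<j}\bigl\|Y^{1/2}(Q_i-Q_j)\bigr\|_2^2
 =3\sum_i\Tr(YQ_i^2)-\Tr\Bigl(Y\bigl(\textstyle\sum_iQ_i\bigr)^2\Bigr).
\]
Combining this with $Q_i^2\le Q_i$ and $\Tr(YQ_i)=\Tr P_i=s$ should bound $\sum\delta_{ij}^2$ by a constant multiple of $s\,\Tr Y$. The case of singular $Y$ is handled by restricting to its support, which contains the supports of the $\rho_i$ and $P_i$.

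I expect the main obstacle to be the constant and the factor $\Tr Y=1+s$. The route above naturally gives an estimate of the form $T^2\le C\,s(1+s)$. The target $T^2\le\tfrac{27}{4}s$ has no $(1+s)$ and a sharp constant. I would first try to absorb the stray factor by splitting $Y$'s trace asymmetrically in Hölder: estimate against $\rho_j$ (trace $1$) on one side and $P_i$ (trace $s$) on the other, rather than $Y$ on both sides. I would also keep the negative term $-\Tr\bigl(Y(\sum_iQ_i)^2\bigr)$ and bound it from below using $\sum_i\Tr(YQ_i)=3s$ and Cauchy--Schwarz, which is where a factor like $4/27$ should emerge. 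If the exact constant resists, any bound $\Sopt\ge1+cT^2$ with $c>1/6$ would still keep $\max_{[0,3]}q<1$ and so give strict separation. I would then optimize the constant in the final step.
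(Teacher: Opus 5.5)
Your dual-side route works, and the obstacle you anticipate does not arise. The second remedy you list already closes the argument with the sharp constant, so neither the asymmetric H\"older split nor the fallback constant is needed. Put $W=Q_1+Q_2+Q_3$. Cauchy--Schwarz for $\langle Y^{1/2},Y^{1/2}W\rangle$ gives $\Tr(YW^2)\ge(\Tr YW)^2/\Tr Y=9s^2/(1+s)$. Combined with $\Tr(YQ_i^2)\le\Tr(YQ_i)=s$, your pairwise H\"older estimate yields
\[
 \sum_{i<j}\|\rho_i-\rho_j\|_1^2
 \le(1+s)\Bigl(9s-\frac{9s^2}{1+s}\Bigr)=9s .
\]
The stray factor $1+s$ therefore cancels exactly. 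Hence $T^2\le3\sum_{i<j}\delta_{ij}^2\le\frac{27}{4}s$, i.e.\ $\Sopt\ge1+4T^2/27$, which is the claim. Two details are fine as you indicate. The singular case works because $0\le\rho_i,P_i\le Y$ places all supports inside that of $Y$. Attainment of the dual optimum follows from compactness of $\{Y:Y\ge\rho_1,\ \Tr Y\le c\}$.

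The paper argues on the primal side instead. It takes the polar decomposition $\mathcal R=VA$ of the column operator, with $A=(\sum_i\rho_i^2)^{1/2}$ and $\tau=\Tr A$. It then proves the generalized Holevo--Curlander bound $\Sopt\ge\tau^2/3$ using the explicit measurement $N_i=V_i^*V_i$. Separately, it shows $\sum_{i<j}\|\rho_i-\rho_j\|_1^2\le3\tau^2-9$ by the same H\"older and Cauchy--Schwarz pattern you use. There, $\sum_iV_i^*V_i=\Pi$ plays the role of your $Q_i^2\le Q_i$, and $\Tr(WA)=3$ plays the role of your $\Tr(YW)=3s$. Chaining the two steps gives the same intermediate inequality $\sum_{i<j}\|\rho_i-\rho_j\|_1^2\le9(\Sopt-1)$ that you obtain in one step.

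Your version is shorter and needs no separate primal lower bound on $\Sopt$, but it relies on a non-explicit optimal dual variable. The paper's version supplies an SDP-free intermediate quantity $\tau$ and an explicit measurement certificate. This gives the slightly more informative chain $\Sopt\ge\tau^2/3\ge1+4T^2/27$.
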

\begin{proof}
Define
\[
 A=(\rho_1^2+\rho_2^2+\rho_3^2)^{1/2},\qquad\tau=\Tr A.
\]
Place the three density matrices above one another to obtain the column
operator $\mathcal R:\mathcal H\to\mathcal H^{\oplus3}$:
\begin{equation}\label{eq:column}
 \mathcal R=\begin{pmatrix}\rho_1\\\rho_2\\\rho_3\end{pmatrix},
 \qquad V=\begin{pmatrix}V_1\\V_2\\V_3\end{pmatrix},
 \qquad\mathcal R=VA.
\end{equation}
Here $\mathcal R=VA$ is the polar decomposition: its positive factor is exactly
the $A$ already defined, because $\mathcal R^*\mathcal R=A^2$. If $\Pi$ is the
support projection of $A$, then
\begin{equation}\label{eq:polar-identities}
 \rho_i=V_iA,\qquad\sum_iV_i^*V_i=\Pi.
\end{equation}
All $\rho_i$ are supported on $\Pi$.

We first prove the standard quadratic success estimate
\begin{equation}\label{eq:HC}
 \Sopt\ge\frac{\tau^2}{3},
\end{equation}
which is a form of the generalized Holevo--Curlander bound
\cite{Tyson2009}. The same operators give a short proof. The effects
$N_i=V_i^*V_i$ form a measurement on the common support; distribute
$\id-\Pi$ arbitrarily among them to obtain a POVM on $\mathcal H$.
Since $\sum_i\Tr(V_i^*\rho_i)=\Tr(\Pi A)=\tau$, Hilbert--Schmidt
Cauchy--Schwarz yields
\[
 \tau\le\sum_i|\Tr(V_i^*\rho_i)|
 \le\sum_i\sqrt{\Tr(\rho_iN_i)}
 \le\sqrt{3\sum_i\Tr(\rho_iN_i)}\le\sqrt{3\Sopt}.
\]
The middle estimate follows by applying Cauchy--Schwarz to
$V_i\rho_i^{1/2}$ and $\rho_i^{1/2}$, using $\Tr\rho_i=1$.

For $i<j$, put
$q_{ij}=\Tr\bigl(A(V_i-V_j)^*(V_i-V_j)\bigr)$. Schatten H\"older gives
\begin{align}
 \|\rho_i-\rho_j\|_1
 &=\|(V_i-V_j)A\|_1\nonumber\\
 &=\|\bigl((V_i-V_j)A^{1/2}\bigr)A^{1/2}\|_1\nonumber\\
 &\le\|(V_i-V_j)A^{1/2}\|_2\|A^{1/2}\|_2
 =\sqrt{\tau q_{ij}}.\label{eq:holder}
\end{align}
Let $W=V_1+V_2+V_3$. Then
\[
 \sum_{i<j}(V_i-V_j)^*(V_i-V_j)=3\Pi-W^*W,
 \qquad\Tr(WA)=\sum_i\Tr\rho_i=3.
\]
Apply Cauchy--Schwarz to the Hilbert--Schmidt product
$\langle A^{1/2},WA^{1/2}\rangle=\Tr(WA)$ to obtain
\[
 9\le\tau\Tr(AW^*W),\qquad
 \sum_{i<j}q_{ij}\le3\tau-\frac9\tau.
\]
Consequently,
\[
 T\le\frac12\sqrt{3\tau\sum_{i<j}q_{ij}}
 \le\frac32\sqrt{\tau^2-3},
\]
so $\tau^2\ge3+4T^2/9$. Combining this with \eqref{eq:HC} proves the first
inequality in \eqref{eq:quadratic-bound}. Finally,
$q'(T)=1-8T/27>0$ for $0\le T\le3$, and $q(3)=2/3$.
\end{proof}

Together with the trine, this proves
$\Creg_3\subsetneq\Qreg_3\subsetneq\Greg_3$ by a single linear functional,
with a dimension-independent gap from the outer maximum.

\subsection{Why the endpoint bound is not sharp}

The endpoint $T=3$ in the last proof means that all pairs are perfectly
distinguishable. Their supports are then pairwise orthogonal, so
$\Sopt=3$ and $\beta=0$, rather than $2/3$. This observation can be made
quantitative using an upper bound on the joint error.

Let $f_{ij}=\|\sqrt{\rho_i}\sqrt{\rho_j}\|_1$ be the root fidelity. The bound of
Audenaert and Mosonyi \cite[Theorem~3.1]{AudenaertMosonyi2014} gives
$F(123)\le\sum_{i<j}f_{ij}$. The upper fidelity--trace-distance inequality and
concavity imply
\[
 \sum_{i<j}f_{ij}
 \le\sum_{i<j}\sqrt{1-\delta_{ij}^2}
 \le\sqrt{9-T^2}.
\]
Thus
\begin{equation}\label{eq:AM-beta}
 \beta\le h(T):=T-3+\sqrt{9-T^2}.
\end{equation}
Together with \eqref{eq:quadratic-bound}, this yields the explicit improvement
\begin{equation}\label{eq:AM-refinement}
 \beta\le\frac{3\sqrt{15}-9}{4}\simeq0.654738<\frac23.
\end{equation}
Indeed, $q$ is increasing, $h$ is decreasing for $T\ge3/\sqrt2$, and they meet
at $T_*=3\sqrt{15}/4$. Use $q(T)\le q(T_*)$ below $T_*$ and
$h(T)\le h(T_*)$ above it. This proves explicitly that the $2/3$ bound is
nonoptimal without resolving the sharp quantum value.

\subsection{The remaining conjecture}

The preceding bounds leave the interval
\begin{equation}\label{eq:remaining-interval}
 0.598076\ldots=\btr
 \le\beta_{\mathrm Q}
 \le\frac{3\sqrt{15}-9}{4}=0.654737\ldots.
\end{equation}

\begin{conjecture}[The trine bound]\label{conj:trine}
For every triple of quantum states in arbitrary finite dimension,
$\beta\le3\sqrt3/2-2$.
\end{conjecture}

Beyond the qubit proof, Appendix~\ref{app:evidence} establishes this bound for
all pure triples, with equality only for an embedded trine. It also proves
that removing a common positive semidefinite component cannot decrease
$\beta$, and describes mixed triples attaining the trine value. The latter
point is important: the conjecture concerns the optimal value, not the claim
that every maximizing triple must be pure.

\section{Quantum violations of classical facets beyond three states}
\label{sec:all-facets}

The three-state inequality $\beta\le0$ is violated by the trine. One can ask
whether a larger number of hypotheses permits a classical facet that quantum
profiles always satisfy, but some GPT profile violates. The explicit facet
coordinates give a negative answer for every $n$.

\begin{theorem}[Quantum violations of all nonuniversal classical facets]
\label{thm:all-facets}
Every facet of $\Creg_n$ that admits a violation in $\Greg_n$ also admits a
violation in $\Qreg_n$, on a Hilbert space of dimension at most three.
More precisely, for $n\ge4$ the singleton facets in
\eqref{eq:classical-facets-all} are universal, and every facet indexed by a set
$B$ with $|B|\ge2$ is violated by a quantum profile. For $n=3$, the singleton
and pair facets are universal and the triple facet is quantum-violatable.
\end{theorem}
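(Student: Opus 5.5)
The plan is to pass between families of different sizes by \emph{repeating hypotheses}, so that everything reduces to two small seed examples: the trine, and a four-state qubit family violating submodularity.

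\textbf{Universal facets.} For every $n$, the singleton facets are universal by \eqref{eq:singleton-universal}, which is just monotonicity of $D$. For $n=3$, the pair inequalities $c_{12},c_{13},c_{23}\ge0$ in \eqref{eq:inverse-coeff} read $t\ge y+z$, $t\ge x+z$, $t\ge x+y$. These hold throughout $\Greg_3$ by Theorem~\ref{thm:G3}. The triple facet is $-\beta\ge0$, and it is violated by the qubit trine \eqref{eq:trine-profile}. So the real content is the case $n\ge4$, $|B|\ge2$.

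\textbf{Pull-back lemma.} Let $\varphi:[n]\to[n']$ be surjective, and set $s_i:=s'_{\varphi(i)}$ for a family $s'_1,\dots,s'_{n'}$. Since identical hypotheses can be merged (add their effects together), $D(H)=D'(\varphi(H))$. Now insert the signed coverage representation \eqref{eq:signed-coverage} of $D'$:
\[
 D(H)=\sum_{B'}\widehat c_{B'}(D')\,[\varphi^{-1}(B')\cap H\ne\varnothing].
\]
The marginals still sum to one, so uniqueness in Proposition~\ref{prop:all-classical-facets} gives
\[
 \widehat c_B(D)=\widehat c_{B'}(D')\ \text{ if } B=\varphi^{-1}(B'),\qquad \widehat c_B(D)=0\ \text{ otherwise.}
\]
A second observation: adjoining a state supported orthogonally to all the others adds $1$ to $D(H)$ whenever it is included in $H$. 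In the coverage representation this simply adds the coefficient $c_{\{\text{new}\}}=1$ and leaves all other coefficients unchanged.

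\textbf{Case $|B|\ge3$.} Put $C=[n]\setminus B\neq\varnothing$. Let $\varphi$ map $B$ onto $\{1,2,3\}$ and all of $C$ to a fourth index $4$. Take $s'_1,s'_2,s'_3$ to be a trine in $\mathrm{span}\{|0\rangle,|1\rangle\}$ and $s'_4=|2\rangle\langle2|$, which fits in dimension $3$. Then $\widehat c_{123}(D')=c_{123}=-\btr<0$, and $B=\varphi^{-1}(\{1,2,3\})$. The pull-back lemma therefore gives $\widehat c_B(D)=-\btr<0$.

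\textbf{Case $|B|=2$.} Now $n\ge4$ gives $|C|\ge2$. Let $\varphi$ map $B$ bijectively to $\{1,2\}$ and $C$ onto $\{3,4\}$. It remains to exhibit four qubit states with
\[
 \widehat c_{12}(D')=D'(134)+D'(234)-D'(1234)-D'(34)<0,
\]
which is exactly a failure of submodularity of $D'$.

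For qubits, the dual computation in the proof of Proposition~\ref{prop:qubit} works verbatim for any number of states. It gives $D'(H)=1+R(H)$ for $|H|\ge2$, where $R(H)$ is the smallest enclosing radius of the Bloch vectors indexed by $H$. So one needs Bloch vectors in the unit ball with
\[
 R(134)+R(234)<R(1234)+R(34).
\]

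\textbf{Main obstacle.} Finding these four Bloch vectors is the step I expect to be hardest. The symmetric configuration $r_{1,2}=(\pm1,0)$, $r_{3,4}=(\pm a,0)$ gives exact equality, and small vertical perturbations seem to push the wrong way. I would therefore search among planar configurations in which $\{1,3,4\}$ and $\{2,3,4\}$ are acute triangles with small circumradius, while the four points together force a large enclosing circle. A promising shape has $r_3,r_4$ moderately far apart and $r_1,r_2$ placed on opposite sides of the segment $r_3r_4$, off its perpendicular bisector. I would first try such a configuration numerically, then verify an explicit rational or trigonometric choice by hand.
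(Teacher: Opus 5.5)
Your reduction is the same as the paper's: the repetition (pull-back) lemma, the trine lifted to any $B$ with $|B|\ge3$ (one orthogonal state in $\mathbb C^2\oplus\mathbb C$ for the outside labels), and a four-state qubit family failing submodularity for $|B|=2$. The genuine gap is that you never produce that four-state family. It is the entire content of the pair-facet case. Without it you have not shown that any facet with $|B|=2$ is quantum-violatable for $n\ge4$, and hence neither the ``more precisely'' clause nor the first sentence of the theorem is established for those facets.

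The missing idea is to keep $\vec r_1,\vec r_2$ antipodal unit vectors, which pins $R(1234)=1$ at its maximum, and rotate them off the line through $\vec r_3,\vec r_4$. The paper uses the perpendicular position:
\[
 \vec r_{1,2}=(0,\pm1,0),\qquad \vec r_{3,4}=(\pm a,0,0),\qquad 0<a<1 .
\]
Then $R(34)=a$. The triangles $134$ and $234$ are acute isosceles: the dot products at the apex and at the base vertices are $1-a^2>0$ and $2a^2>0$. Their circumcentres are $(0,\pm(1-a^2)/2,0)$ and their radius is $(1+a^2)/2$, so
\[
 \widehat c_{12}(D')=2\cdot\frac{1+a^2}{2}-1-a=a^2-a<0 .
\]
This places $\vec r_1,\vec r_2$ on the perpendicular bisector of $\vec r_3\vec r_4$, the position you proposed to avoid.

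Your collinear equality case is in fact the endpoint of a working family. Take $\vec r_{1,2}=\pm(\cos\phi,\sin\phi,0)$ with $a<\cos\phi<1$. The triangles $134$ and $234$ are then obtuse at $\vec r_3$ and $\vec r_4$ respectively, and their enclosing radius is half the longest side. This gives $\widehat c_{12}(D')=\sqrt{1+a^2+2a\cos\phi}-(1+a)<0$ for every such $\phi\ne0$. So the perturbations that ``push the wrong way'' are those that break antipodality or translate $\vec r_3,\vec r_4$ off the axis, not a rigid rotation of the antipodal pair.

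There are also two small omissions. First, when $B=[n]$ there is no outside label, so your map onto $\{1,2,3,4\}$ is not surjective; simply map $[n]$ onto the three trine labels, and a qubit suffices. Second, the case $n=2$ should be recorded: there $\Creg_2=\Greg_2=\{0\le F(12)\le1\}$, so no classical facet admits a GPT violation.
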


We first record how the signed coefficients behave when hypotheses are
repeated.

\begin{lemma}[Repetition of hypotheses]\label{lem:cloning}
Let $g:[n]\twoheadrightarrow[m]$ be a surjection and let $s_i=t_{g(i)}$.
If $D_0$ is the success profile of $(t_j)_{j=1}^m$, then
$D(H)=D_0(g(H))$. Moreover,
\begin{equation}\label{eq:cloning-coefficients}
 \widehat c_{g^{-1}(J)}(D)=\widehat c_J(D_0)
 \quad(\varnothing\ne J\subseteq[m]),
\end{equation}
and every coefficient not indexed by a union of fibers of $g$ is zero.
\end{lemma}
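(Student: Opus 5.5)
The plan has two parts: first the identity $D(H)=D_0(g(H))$, then the coefficient formula via uniqueness of signed coverage coefficients (Proposition~\ref{prop:all-classical-facets}).

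\textbf{The identity $D(H)=D_0(g(H))$.} For $H\subseteq[n]$, the subfamily $(s_i)_{i\in H}$ consists of the states $t_j$, $j\in g(H)$, with repetitions. I will prove the two inequalities separately.
\begin{itemize}
\item[$\ge$:] Given an optimal measurement $(e_j)_{j\in g(H)}$ for $D_0(g(H))$, pick one representative $i_j\in H\cap g^{-1}(j)$ for each $j$. Assign $e_j$ to $i_j$ and zero effects to the other members of $H$. This achieves $D_0(g(H))$.
\item[$\le$:] Given any measurement $(e_i)_{i\in H}$, merge the effects in each fiber: $f_j=\sum_{i\in H\cap g^{-1}(j)}e_i$. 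Then
\[
 \sum_{i\in H}e_i(s_i)=\sum_{j}f_j(t_j)\le D_0(g(H)).
\]
\end{itemize}
The empty set is trivial.

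\textbf{The coefficient formula.} Define candidate coefficients $c_B:=\widehat c_J(D_0)$ if $B=g^{-1}(J)$ for some nonempty $J$, and $c_B:=0$ otherwise. I will check that these satisfy \eqref{eq:signed-coverage} for $D$.

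For the coverage identity, note that $g^{-1}(J)\cap H\ne\varnothing$ if and only if $J\cap g(H)\ne\varnothing$. Hence
\[
 \sum_{B\cap H\ne\varnothing}c_B=\sum_{J\cap g(H)\ne\varnothing}\widehat c_J(D_0)=D_0(g(H))=D(H).
\]
The marginal identities follow the same way: $i\in g^{-1}(J)$ if and only if $g(i)\in J$, so
\[
 \sum_{B\ni i}c_B=\sum_{J\ni g(i)}\widehat c_J(D_0)=1.
\]
Also $D(\varnothing)=0$ and $D(\{i\})=1$ hold, so Proposition~\ref{prop:all-classical-facets} applies.

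The uniqueness clause of that proposition then forces $\widehat c_B(D)=c_B$. This gives \eqref{eq:cloning-coefficients} together with the vanishing of every coefficient not indexed by a union of fibers. The map $J\mapsto g^{-1}(J)$ is injective because $g$ is surjective, so there is no ambiguity in the definition of $c_B$.

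\textbf{Main obstacle.} The only delicate point is invoking uniqueness, which rests on the Möbius-inversion argument: $U(A)$ determines the coefficients. That argument was already established, so the rest is bookkeeping. As a fallback, the formula could be verified directly from \eqref{eq:signed-coefficients}. Since $D([n]\setminus A)$ depends only on which fibers are completely contained in $A$, the alternating sum over $A\subseteq B$ cancels whenever $B$ contains a partial fiber: toggling an element of a partial fiber changes $|A|$ by one without changing $D$.
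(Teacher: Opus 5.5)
Your proof is correct and follows the paper's argument: merging effects within each fiber and assigning an optimal measurement to one representative per fiber gives $D(H)=D_0(g(H))$. The equivalence $g^{-1}(J)\cap H\ne\varnothing\iff J\cap g(H)\ne\varnothing$, combined with uniqueness of the signed coverage expansion, then yields the coefficient formula and the vanishing of all other coefficients. You spell out the marginal check and the injectivity of $J\mapsto g^{-1}(J)$, which the paper leaves implicit.
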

\begin{proof}
Effects assigned to identical states may be merged without changing the
success sum. Conversely, a measurement for the distinct states may be
assigned to one representative of each fiber, with zero effects on the other
labels. This proves the success identity. In the unique expansion
\eqref{eq:signed-coverage}, the condition $J\cap g(H)\ne\varnothing$ is exactly
$g^{-1}(J)\cap H\ne\varnothing$, proving the coefficient formula.
\end{proof}

\begin{proof}[Proof of Theorem~\ref{thm:all-facets}]
Singleton facets are universal by \eqref{eq:singleton-universal}. For any
$B\subseteq[n]$ with $|B|\ge3$, partition $B$ into three nonempty groups and
assign one of the three trine states to each group. If $B\ne[n]$, assign a
single orthogonal pure state to all labels outside $B$, using
$\mathbb C^2\oplus\mathbb C$. The success profile is the trine profile on the
groups that are represented, plus $1$ when an outside label is present.
By Lemma~\ref{lem:cloning} and uniqueness of the signed coefficients,
\begin{equation}\label{eq:trine-lift}
 \widehat c_B(D)=\widehat c_{123}(D_{\mathrm{trine}})=-\btr<0.
\end{equation}

To violate a pair facet requires four hypotheses. Consider the four qubit
states with Bloch vectors
\begin{equation}\label{eq:four-qubit-vectors}
 \vec r_1=(0,1,0),\quad \vec r_2=(0,-1,0),\quad
 \vec r_3=(a,0,0),\quad \vec r_4=(-a,0,0),\qquad 0<a<1.
\end{equation}
The dual calculation in \eqref{eq:qubit-geometry} applies to any number of qubit
states: $D(H)=1+R(H)$ for nonempty $H$, where $R(H)$ is their smallest
enclosing-ball radius. Here
\begin{align}
 D(1234)&=2, & D(34)&=1+a,\nonumber\\
 D(134)&=D(234)=1+\frac{1+a^2}{2}.\label{eq:four-qubit-success}
\end{align}
Indeed, the two antipodal unit vectors force the full radius to be $1$.
The triples $134$ and $234$ form acute triangles with circumcenters
$(0,\pm(1-a^2)/2,0)$ and radius $(1+a^2)/2$. It follows that
\begin{equation}\label{eq:negative-pair}
 \widehat c_{12}(D)
 =D(234)+D(134)-D(34)-D(1234)=a^2-a<0.
\end{equation}
Taking $a=1/2$ gives $\widehat c_{12}=-1/4$.
For a designated pair $B$ in any $n\ge4$, assign the first two states to its
two labels and partition the remaining labels into two nonempty groups,
assigned states three and four. Lemma~\ref{lem:cloning} preserves the negative
coefficient on $B$.

For $n=3$, Theorem~\ref{thm:C3} and the one-point gluing inequalities give the
asserted classification. For $n=2$, both the classical and GPT regions are
$0\le F(12)\le1$, so neither classical facet admits a GPT violation.
\end{proof}

The four-state example also gives
\begin{equation}\label{eq:submodularity-failure}
 D(134)+D(234)<D(1234)+D(34),
\end{equation}
so quantum success profiles need not be submodular. For three hypotheses,
submodularity follows from the universal inequalities in
Theorem~\ref{thm:G3}. Thus a new qualitative feature already appears at
$n=4$, although not the proposed separation by a quantum-valid classical
facet. Repeated hypotheses are a convenient construction device, not essential
to these strict violations: small perturbations preserve the sign.

Theorem~\ref{thm:all-facets} concerns the existence of a quantum violation of
\emph{each individual classical facet}. It does not identify their quantum
maxima, assert simultaneous violations, or equate $\Qreg_n$ with $\Greg_n$.
It distinguishes this hierarchy from multipartite Bell scenarios: the
``Guess Your Neighbor's Input'' inequalities include local facets whose
classical and quantum bounds coincide, but which no-signalling correlations
can violate \cite{AlmeidaEtAl2010}.

\section{Unequal priors can reveal hidden nonclassicality}\label{sec:weighted}

The discrete profile samples only the uniform prior on each subfamily. It need
not determine whether the same states have a classical description at all
priors. We give one explicit separation, without developing the general
weighted theory.

For $w\in\R_+^3$, define the homogeneous success and error functions
\begin{equation}\label{eq:weighted-def}
 \Sopt_\rho(w)=\max_{(M_i)_{i=1}^3}\sum_iw_i\Tr(\rho_iM_i),
 \qquad E_\rho(w)=\sum_iw_i-\Sopt_\rho(w).
\end{equation}
For $H\subseteq[3]$, let $w_H$ agree with $w$ on $H$ and vanish elsewhere,
without renormalization. Then $E_\rho(\1_H)=F(H)$. A weighted classical
profile means that a \emph{single} family of classical distributions reproduces
$\Sopt_\rho(w)$ for every $w$; the classical family cannot be changed with the
prior.

Define
\begin{equation}\label{eq:weighted-beta}
 B_w:=E_\rho(w)-E_\rho(w_{12})-E_\rho(w_{13})-E_\rho(w_{23}).
\end{equation}
For classical distributions $p_1,p_2,p_3$ on an alphabet $X$,
\begin{equation}\label{eq:weighted-classical}
 B_w=-\sum_{x\in X}\min\{w_1p_1(x),w_2p_2(x),w_3p_3(x)\}\le0.
\end{equation}
To see this, sort the three nonnegative numbers at each outcome. Their
three-way error is the sum of the two smaller numbers, whereas their three
binary errors add up to twice the smallest plus the middle number.
The same identity holds with an integral for general classical distributions.

\begin{proposition}[A classical discrete profile with a nonclassical weighted profile]
\label{prop:hidden}
There exist three density matrices on $\mathbb C^4$ whose complete discrete
profile belongs to $\Creg_3$, but for which $B_w>0$ for some positive $w$.
\end{proposition}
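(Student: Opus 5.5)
The plan is to use the flagged direct-sum construction from Proposition~\ref{prop:convex}, applied simultaneously to the discrete and the weighted quantities. Take a qubit trine block $(\tau_1,\tau_2,\tau_3)$ from \eqref{eq:trine-vectors} with weight $\lambda$, and a two-outcome classical block with weight $1-\lambda$, embedded diagonally in $\mathbb C^2$. Together they act on $\mathbb C^2\oplus\mathbb C^2=\mathbb C^4$:
\[
 \rho_i=\lambda\tau_i\oplus(1-\lambda)\operatorname{diag}(p_i).
\]
The flag argument of Proposition~\ref{prop:convex} works verbatim for every weight vector $w$. Hence $\Sopt_\rho(w)=\lambda\Sopt_\tau(w)+(1-\lambda)\Sopt_p(w)$ and $E_\rho(w)=\lambda E_\tau(w)+(1-\lambda)E_p(w)$. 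This additivity passes to $\beta$ and to $B_w$. By \eqref{eq:weighted-classical}, the classical block contributes $-\sum_x\min_iw_ip_i(x)$.

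\textbf{Step 1: choose the classical block.} I would take $p_1=p_2=(1,0)$ and $p_3=(\kappa,1-\kappa)$ with $0<\kappa<1$. Then the classical block contributes $-\min\{w_1,w_2,\kappa w_3\}$ to $B_w$. At uniform weights this is $-\kappa$. At $w=(1,1,\mu)$ with $\mu<1$ it is only $-\kappa\mu$. The asymmetry is the point: the classical penalty shrinks linearly in $w_3$.

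\textbf{Step 2: make the discrete profile classical.} By Corollary~\ref{cor:unique-facet}, and because every GPT profile obeys the remaining six inequalities of Theorem~\ref{thm:G3}, the discrete profile lies in $\Creg_3$ exactly when
\[
 \beta=\lambda\btr-(1-\lambda)\kappa\le0 .
\]
I would impose equality, $(1-\lambda)\kappa=\lambda\btr$.

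\textbf{Step 3: make the weighted quantity positive.} At $w=(1,1,\mu)$ we then get
\[
 B_w=\lambda\bigl(B^{\tau}_{(1,1,\mu)}-\mu\btr\bigr).
\]
So it suffices to find some $\mu\in(0,1)$ with $B^\tau_{(1,1,\mu)}>\mu\,B^\tau_{(1,1,1)}$. In words, the trine's weighted witness must drop more slowly than linearly when $w_3$ is lowered.

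The main obstacle is evaluating $B^\tau_{(1,1,\mu)}$. The pair terms are easy, via the weighted Helstrom formula: $E(w_{ij})=\tfrac12\bigl(w_i+w_j-\|w_i\tau_i-w_j\tau_j\|_1\bigr)$. For the triple term I would use the qubit dual of \eqref{eq:quantum-SDP}, with constraints $Y\ge w_i\tau_i$. Writing $Y=(a\id+\vec b\cdot\vec\sigma)/2$, this becomes $\min a$ subject to $a-w_i\ge\|\vec b-w_i\vec r_i\|$, which is a weighted (Apollonius-type) enclosing-ball problem. By symmetry $\vec b$ lies on the $x$-axis, so the problem is a one-variable optimization that can be solved explicitly, or at least at a convenient value such as $\mu=1/2$. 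Using homogeneity and continuity, $B^\tau_{(1,1,\mu)}\to0$ as $\mu\to0$. I would first test small $\mu$, comparing the slope at $\mu=0$ with $\btr$; if the slope exceeds $\btr$, the claim follows.

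If the trine turns out to fail this test, the fallback is to replace the trine by another qubit triple that is tilted toward hypotheses $1,2$, or to swap which hypothesis carries the classical asymmetry. The same additivity argument then applies unchanged.
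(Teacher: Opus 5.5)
\textbf{Verdict.} Your reduction is sound, but the proof has a gap: the one inequality that carries the content is never verified. Hypothesis-independent block mixing makes $E_\rho(w)$, and hence $\beta$ and $B_w$, additive for every $w$. The classical block contributes $-(1-\lambda)\min\{w_1,w_2,\kappa w_3\}$, and $(1-\lambda)\kappa=\lambda\btr$ puts the discrete profile on the facet $\beta=0$, hence in $\Creg_3$. What is missing is a proof of $B^\tau_{(1,1,\mu)}>\mu\btr$ for some $\mu$, and the test you propose to run first actually fails.

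\textbf{The missing computation.} For $0\le\mu\le(1+\sqrt3)/4$ one has $\Sopt_\tau(1,1,\mu)=1+\sqrt3/2$.
\begin{itemize}
\item Helstrom discrimination on $\{1,2\}$ attains this value.
\item $Y=\frac12(a\id+\vec b\cdot\vec\sigma)$ with $a=1+\sqrt3/2$ and $\vec b=\frac12(\vec r_1+\vec r_2)=-\frac12\vec r_3$ is dual feasible. Indeed $\|\vec b-\vec r_1\|=\|\vec b-\vec r_2\|=\sqrt3/2=a-1$, and $\|\vec b-\mu\vec r_3\|=\frac12+\mu\le a-\mu$.
\item This $\vec b$ lies on the line through $\vec r_3$, which is the axis of the reflection exchanging hypotheses $1$ and $2$. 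With the labeling \eqref{eq:trine-vectors} it is not on the $x$-axis, so your symmetry claim should be corrected accordingly.
\end{itemize}
Combined with \eqref{eq:binary-trine-weighted}, this gives $B^\tau_{(1,1,\mu)}=\sqrt{1+\mu+\mu^2}-1$.

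\textbf{Why the small-$\mu$ test fails.} The slope of this function at $\mu=0$ is $\frac12$, and $\frac12<\btr$ because $27>25$. So your construction has $B_{(1,1,\mu)}<0$ for all $0<\mu<(2\btr-1)/(1-\btr^2)\approx0.305$, and the small-$\mu$ test gives nothing.

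\textbf{It works at $\mu=\frac12$.} Your other candidate succeeds: $B^\tau_{(1,1,1/2)}=\frac{\sqrt7}2-1>\frac{\btr}2=\frac{3\sqrt3}4-1$, because $28>27$. Hence $B_{(1,1,1/2)}=\lambda(2\sqrt7-3\sqrt3)/4>0$. For instance, $\lambda=\frac12$ and $\kappa=\btr$ give valid states on $\mathbb C^4$.

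\textbf{Comparison with the paper.} With this filled in, your proof is correct and takes a different route from the paper's.
\begin{itemize}
\item The paper keeps its classical part symmetric: a common block of weight $\frac1{10}$, contributing $-\frac1{10}\min_iw_i$.
\item It puts the asymmetry into the quantum part instead. A private flag $|f\rangle$ diverts most of $\rho_1$'s mass, so the trine enters with effective weights $(w_1/10,w_2,w_3)$.
\item Under uniform priors the trine is then nearly binary, $B^\tau_{(1/10,1,1)}=\sqrt{1.11}-1$, and the classical penalty wins.
\item At $w=(10,1,1)$ the trine is restored to equal weights, and $\frac9{10}\btr>\frac1{10}$.
\end{itemize}
Both arguments need one unequal-weight trine value, certified by the same ``ignore the light hypothesis'' dual. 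The paper uses it on the discrete side and the known value $\btr$ on the witness side; you do the reverse. Your route needs only the hypothesis-independent mixing of Proposition~\ref{prop:convex} and lands exactly on the facet. However, its margin is very thin ($28$ versus $27$), whereas the paper's witness $B_{(10,1,1)}=\frac{27\sqrt3}{20}-\frac{19}{10}$ has room to spare.
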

\begin{proof}
Let $\tau_1,\tau_2,\tau_3$ be the trine states from
\eqref{eq:trine-vectors}. On
$\mathbb C|c\rangle\oplus\mathbb C|f\rangle\oplus\mathbb C^2$, take
\begin{align}
 \rho_1&=\frac1{10}|c\rangle\langle c|
       +\frac{81}{100}|f\rangle\langle f|+\frac9{100}\tau_1,
       \label{eq:hidden-states}\\
 \rho_2&=\frac1{10}|c\rangle\langle c|+\frac9{10}\tau_2,\nonumber\\
 \rho_3&=\frac1{10}|c\rangle\langle c|+\frac9{10}\tau_3.\nonumber
\end{align}
The three mutually orthogonal blocks can be read first without loss of
optimality. The $|f\rangle$ block never makes an error, since only hypothesis
one occupies it. Hence
\begin{equation}\label{eq:hidden-error}
 E_\rho(w)=\frac1{10}\left(\sum_iw_i-\max_iw_i\right)
       +\frac9{10}E_\tau(w_1/10,w_2,w_3).
\end{equation}

The unequal-weight binary trine error is
\begin{equation}\label{eq:binary-trine-weighted}
 E_\tau(a,b,0)=\frac{a+b-\sqrt{a^2+ab+b^2}}2,
\end{equation}
with the same expression for the other pairs. This follows from the binary
Helstrom formula and $\Tr(\tau_i\tau_j)=1/4$.
For the weights $(1/10,1,1)$, the optimal trine success is $1+\sqrt3/2$:
one may ignore the first state and optimally discriminate the other two.
To check the dual bound explicitly, use
\[
 Y=\frac12\left((1+\sqrt3/2)\id-\frac12\sigma_x\right).
\]
It dominates $\tau_2$ and $\tau_3$. It also dominates $\tau_1/10$, because its
Bloch-vector difference has norm $3/5$, while its scalar difference is
$9/10+\sqrt3/2>3/5$. Thus this binary strategy is optimal for the three-way
weighted problem as well.

Writing $r=\sqrt{111}/10$ and $s=\sqrt3/2$, we obtain from
\eqref{eq:hidden-error}
\begin{align}
 x=y&=\frac1{10}+\frac9{20}\left(\frac{11}{10}-r\right),\nonumber\\
 z&=\frac1{10}+\frac9{10}(1-s),\label{eq:hidden-profile}\\
 t&=\frac15+\frac9{10}\left(\frac{11}{10}-s\right).\nonumber
\end{align}
Consequently,
\begin{equation}\label{eq:hidden-discrete-beta}
 \beta=\frac{9\sqrt{111}}{100}-1<0.
\end{equation}
Since this is a quantum and hence GPT profile, Theorem~\ref{thm:C3} proves
that its \emph{entire} discrete profile is classical.

In contrast, take $w=(10,1,1)$. In the trine block the weights are now equal.
The common classical block contributes $-1/10$ to $B_w$, and the trine block
contributes $9\btr/10$. Therefore
\begin{equation}\label{eq:hidden-weighted-beta}
 B_{(10,1,1)}=\frac9{10}\btr-\frac1{10}
 =\frac{27\sqrt3}{20}-\frac{19}{10}>0.
\end{equation}
For the normalized prior $p=(5/6,1/12,1/12)$ this is
\[
 B_p=\frac{27\sqrt3-38}{240}\simeq0.0365224.
\]
Equation~\eqref{eq:weighted-classical} rules out every classical model of the
weighted profile.
\end{proof}

The single unweighted witness $\beta$ is therefore no longer a complete test
when one asks whether the entire prior-dependent profile is classical. Here
unequal weights rebalance a trine component that is suppressed in the
uniform-prior data.

\section{Discussion and open questions}\label{sec:discussion}

The envelope theorem identifies all discrimination profiles compatible with
the general operational framework and makes their universal constraints
accessible to finite linear programming. The classical region consists of a
single normalized coverage function, while the GPT region permits finite upper
envelopes. A Cartesian product of simplexes realizes exactly this freedom.
The three-state case already produces a complete nonclassicality witness and
strict classical--quantum--GPT separation.

The first quantum question is to determine $\beta_{\mathrm Q}$. The trine is
optimal for qubits and all pure triples, and the unrestricted optimum lies
between $3\sqrt3/2-2$ and $(3\sqrt{15}-9)/4$. A proof of the conjectured
bound would give
an exact quantum value for the sole nonuniversal classical facet at $n=3$.
It would not, by itself, determine the entire quantum region $\Qreg_3$.

Larger numbers of hypotheses already exhibit qualitatively new behavior:
quantum success profiles need not be submodular, as the four-state example in
Section~\ref{sec:all-facets} shows. Nevertheless, a classical facet that is
valid for all quantum profiles but violated by a GPT cannot occur at any $n$:
Theorem~\ref{thm:all-facets} rules this out. This leaves substantial questions
about the quantum bounds of these facets, relations among their violations,
and genuinely new constraints defining the GPT region at higher $n$.
It also leaves open analogous questions for valid classical inequalities
that are not facets. The supporting linear program provides a concrete route
to investigating these higher-dimensional regions.

A separate issue is dimension. The universal construction is explicit but can
be large. Minimal-dimensional realization of a quantitative profile extends
the geometric questions surrounding perfect distinguishability and higher rank
antipodality \cite{Weiner2025,NaszodiSzilagyiWeiner2025}. It is natural to compare
general GPT realizations with Cartesian products, as well as fixed-dimensional
quantum regions with the unrestricted quantum region.

Finally, Proposition~\ref{prop:hidden} shows that a profile over all priors
contains genuinely more discrimination information than its subfamily samples.
An extension of the present hierarchy to prior-dependent functions should
therefore distinguish finite collections of priors from the full continuous
profile. This direction complements, rather than replaces, the finite
polyhedral problem studied here.

\section*{Acknowledgments}
The author thanks the participants in the weekly seminar of the quantum
information group at Budapest University of Technology and Economics for
questions, suggestions, and discussions that helped shape the results and
their presentation.

This work was supported by the National Research, Development and Innovation
Office of Hungary (NKFIH) via the grants ADVANCED\_25 152599, K~146380 and
EXCELLENCE~151342, and by the Ministry of Culture and Innovation and the NKFIH
within the Quantum Information National Laboratory of Hungary
(Grant No.~2022-2.1.1-NL-2022-00004).

\section*{Author contributions and use of AI tools}
The author formulated and directed the project, selected and developed its
mathematical direction, and takes responsibility for the mathematical claims
and the final manuscript. The research and preparation of this article
involved an extended series of conversations with ChatGPT, using the models
Sol 5.6 and Astra. Most of the mathematical material, including the detailed
proof arguments, was developed through these interactions. The tools also
assisted with the construction and analysis of examples, exploratory
calculations and code, literature searches, and drafting and revision of the
text. Their use was therefore not limited to language editing.

\appendix
\section{Rigorous evidence for the trine conjecture}\label{app:evidence}

\subsection{Pure states in arbitrary dimension}

\begin{theorem}[Pure-state optimum]\label{thm:pure}
For any three pure states in an arbitrary finite-dimensional Hilbert space,
\[
 \beta\le\frac{3\sqrt3}{2}-2.
\]
Equality holds only for a qubit trine embedded in that space, up to unitary
conjugation and permutation.
\end{theorem}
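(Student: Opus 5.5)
The plan is to reduce everything to the Gram matrix. Three pure states $\psi_1,\psi_2,\psi_3$ span a subspace of dimension at most three, and their discrimination profile depends only on the Gram matrix $G=(\langle\psi_i,\psi_j\rangle)$. After rephasing the vectors, $G$ is determined by the three moduli $g_{ij}=|\langle\psi_i,\psi_j\rangle|$ and a single phase $\varphi=\arg(G_{12}G_{23}G_{31})$. The pairwise terms are explicit: $\delta_{ij}=\sqrt{1-g_{ij}^2}$, so $T=\sum_{i<j}\sqrt{1-g_{ij}^2}$ does not depend on $\varphi$. The task is therefore to bound $\Sopt$ from below as a function of $(g_{12},g_{13},g_{23},\varphi)$.

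For this lower bound I would use the square-root (pretty good) measurement, whose success for pure states is $\sum_i\bigl((\sqrt G)_{ii}\bigr)^2$. This dominates the bound $\tau^2/3$ of \eqref{eq:HC}, because here $A=(\sum_i|\psi_i\rangle\langle\psi_i|)^{1/2}$ has $\Tr A=\Tr\sqrt G$. The proof then splits into two regimes.

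\emph{Rank-deficient regime.} If, for the given moduli, some admissible phase makes $\det G=0$, the states with that phase lie in a qubit, and Proposition~\ref{prop:qubit} applies to them directly. It would then suffice to show that, for fixed moduli, $\Sopt$ is minimized over $\varphi$ at the rank-two configuration. Since $T$ is constant in $\varphi$, this would transport the qubit bound to every phase. I would try to prove this by showing that the rank-two phase is the one maximizing $\operatorname{Re}(G_{12}G_{23}G_{31})$ in the determinant, and that $\Sopt$ (or the square-root lower bound) is monotone in that real part. For the monotonicity, the natural tool is the dual program \eqref{eq:quantum-SDP}: a feasible $Y$ for one phase should be transported to a feasible $Y$ for another phase without increasing its trace.

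\emph{Full-rank regime.} When no admissible phase makes $G$ singular, the overlaps are small: $g_{12}^2+g_{13}^2+g_{23}^2-2g_{12}g_{13}g_{23}<1$. In this regime I would use the square-root bound directly. I would expand $(\sqrt G)_{ii}$ to second order in the overlaps, or bound it by $1-\tfrac12\sum_{j\ne i}g_{ij}^2$-type estimates, and compare with $T\le\sum_{i<j}\bigl(1-g_{ij}^2/2\bigr)$. The expected outcome is $\beta\le c\sum g_{ij}^2$-type smallness, which is well below $\btr$ away from the boundary of this regime. Near the boundary, continuity connects it to the rank-deficient case.

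The equality statement should follow from the equality case of Proposition~\ref{prop:qubit}, once both reductions are shown to be strict off the rank-two locus.

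I expect the main obstacle to be the phase-monotonicity step: controlling the true optimum $\Sopt$ in $\varphi$, rather than only the square-root bound. If the direct dual-transport argument fails, the fallback is to work only with the square-root bound and verify numerically, then analytically, the four-variable inequality
\[
 \sum_{i<j}\sqrt{1-g_{ij}^2}-\sum_i\bigl((\sqrt G)_{ii}\bigr)^2\le\btr .
\]
One would first use permutation symmetry and a Schur-concavity argument to reduce to equal moduli, and then finish with a one-variable calculus check in which the trine, with $g=1/2$ and the rank-two phase, emerges as the maximizer.
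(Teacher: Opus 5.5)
\textbf{There are genuine gaps.} As written, this is a plan rather than a proof: each of your two regimes rests on a step that is either unproven or would not work.

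\emph{Rank-deficient regime.} This case depends entirely on the claim that, for fixed moduli, the true optimum $\Sopt$ is minimized over the phase at the singular configuration. You give no argument for this beyond the hope that a dual $Y$ can be transported. The parenthetical alternative, monotonicity of the square-root value, would not suffice. Proposition~\ref{prop:qubit} bounds $T-\Sopt$ at the rank-two phase, not $T$ minus the square-root value there, and the inequality between the two runs the wrong way for your transport.

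\emph{Sign error in the setup.} We have $\det G=1-q+2\operatorname{Re}(G_{12}G_{23}G_{31})$ with $q=\sum_{i<j}g_{ij}^2$. So the singular phase \emph{minimizes} $\operatorname{Re}(G_{12}G_{23}G_{31})$; for the trine the product is $-1/8$. The regime in which every phase gives full rank is therefore $q+2g_{12}g_{13}g_{23}<1$. Your condition, with a minus sign, already contains the trine moduli.

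\emph{Full-rank regime.} Even with the correct condition, the overlaps are not individually small: $g_{12}\to1$, $g_{13}=g_{23}=0$ is allowed, and equal moduli range up to $1/2$. A second-order expansion is therefore not a bound there. Continuity near the boundary also cannot produce a sharp inequality, since the equality case sits on that boundary.

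\emph{Fallback.} The reduction to equal moduli by Schur-concavity is unsupported. The quantity $\sum_i((\sqrt G)_{ii})^2$ depends jointly on the moduli and the phase, and you have no convexity property that would justify it.

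\textbf{The missing idea} is that the phase never needs to be controlled.
\begin{itemize}
\item Relax the square-root value to $(\Tr\sqrt G)^2/3=1+\tfrac23S_1$, using $\sum_i x_i^2\ge(\sum_i x_i)^2/3$. Here $S_1=\sum_{i<j}\sqrt{\lambda_i\lambda_j}$ over the eigenvalues of $G$; this is exactly the $\tau^2/3$ bound you already noticed.
\item Bound $T\le\sqrt{3\sum_{i<j}(1-g_{ij}^2)}=\sqrt{3Q}$ by concavity. Here $Q=3-q=\sum_{i<j}\lambda_i\lambda_j$, because $\Tr G^2=3+2q$.
\end{itemize}
Both bounds are now spectral, so $\beta\le\sqrt{3Q}-1-\tfrac23S_1$ becomes a symmetric scalar inequality in $\lambda_i\ge0$ with $\sum_i\lambda_i=3$. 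The paper proves it by splitting at $S_1=3/2$ and applying Schur's inequality to $x_i=\sqrt{\lambda_i}$ in the upper range.

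For the equality case, equality forces the spectrum $(3/2,3/2,0)$ and $g_{ij}=1/2$. Then $\det G=0$ gives $G_{12}G_{23}G_{31}=-1/8$, which is the trine Gram matrix. Neither a qubit reduction nor phase monotonicity is needed.
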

\begin{proof}
Write $\rho_i=|\psi_i\rangle\langle\psi_i|$ and let
$G=(\langle\psi_i,\psi_j\rangle)_{i,j=1}^3$ be the Gram matrix.
The pretty-good measurement on their span has success
$\sum_i(\sqrt G)_{ii}^2$, including the singular case by using the inverse on
the support. To see the formula, let $\Psi$ have columns $\psi_i$ and put
$S=\Psi\Psi^*$. The measurement vectors are $S^{-1/2}\psi_i$, and
$\Psi^*S^{-1/2}\Psi=\sqrt G$. Therefore
\begin{equation}\label{eq:pure-PGM}
 \Sopt\ge\sum_i(\sqrt G)_{ii}^2\ge\frac{(\Tr\sqrt G)^2}{3}.
\end{equation}

Let $\lambda_1,\lambda_2,\lambda_3$ be the eigenvalues of $G$; they are
nonnegative and sum to three. Set
\[
 q=\sum_{i<j}|G_{ij}|^2,
 \quad Q=\sum_{i<j}\lambda_i\lambda_j,
 \quad S_1=\sum_{i<j}\sqrt{\lambda_i\lambda_j}.
\]
Then $Q=3-q$ and concavity gives
\[
 T=\sum_{i<j}\sqrt{1-|G_{ij}|^2}\le\sqrt{3Q},\qquad
 \Sopt\ge1+\frac23S_1.
\]
Consequently,
\begin{equation}\label{eq:pure-reduction}
 \beta\le\sqrt{3Q}-1-\frac23S_1.
\end{equation}

For completeness, the required scalar inequality is proved next. Put
$x_i=\sqrt{\lambda_i}$, so $\sum_i x_i^2=3$, and write
$p=x_1+x_2+x_3$, $r=x_1x_2x_3$. Then
\[
 0\le S_1\le3,\quad p^2=3+2S_1,\quad Q=S_1^2-2pr.
\]
If $S_1\le3/2$, then $Q\le S_1^2$, and
\[
 \sqrt{3Q}-\frac23S_1
 \le\left(\sqrt3-\frac23\right)S_1
 \le\frac{3\sqrt3}{2}-1.
\]
If $S_1\ge3/2$, Schur's inequality $p^3+9r\ge4pS_1$ gives
$r\ge p(2S_1-3)/9$. Using $S_1\le3$, we obtain
\[
 4pr\ge\frac{4(3+2S_1)(2S_1-3)}9
 \ge(S_1+1)(2S_1-3).
\]
Thus $2Q\le S_1+3$, and
\[
 \sqrt{3Q}-\frac23S_1
 \le\sqrt{\frac32S_1+\frac92}-\frac23S_1.
\]
The last function is strictly decreasing for $3/2\le S_1\le3$, so its maximum
is again $3\sqrt3/2-1$. This proves the bound in
\eqref{eq:pure-reduction}.

Equality forces $S_1=3/2$, $Q=S_1^2$, and hence $r=0$. Together with
$\sum_i\lambda_i=3$, this gives the spectrum $(3/2,3/2,0)$ up to permutation.
Equality in the concavity estimate forces $|G_{ij}|=1/2$ for $i\ne j$.
The determinant equation then gives
$G_{12}G_{23}G_{31}=-1/8$. Changing the phases of the vectors makes all three
off-diagonal entries $-1/2$. This is the Gram matrix of a planar trine,
proving the equality characterization.
\end{proof}

\subsection{Removing a common positive component}

\begin{proposition}[Common positive semidefinite mass]\label{prop:common-mass}
Suppose $0\le B\le\rho_i$ for $i=1,2,3$, and let $a=\Tr B<1$. Define
$\sigma_i=(\rho_i-B)/(1-a)$. Then
\begin{equation}\label{eq:common-mass}
 \beta(\rho_1,\rho_2,\rho_3)
 =(1-a)\beta(\sigma_1,\sigma_2,\sigma_3)-a.
\end{equation}
Removing the common component cannot decrease $\beta$, and strictly increases
it when $a>0$ and $\beta(\rho_1,\rho_2,\rho_3)>-1$.
\end{proposition}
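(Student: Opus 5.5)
The plan is to track how each ingredient of $\beta=T-\Sopt$ transforms under $\rho_i=B+(1-a)\sigma_i$, and then read off both the identity and the monotonicity statement.

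\emph{Pairwise terms.} These are immediate. Since $\rho_i-\rho_j=(1-a)(\sigma_i-\sigma_j)$, we get $\delta_{ij}(\rho)=(1-a)\,\delta_{ij}(\sigma)$. Hence $T(\rho)=(1-a)T(\sigma)$.

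\emph{Triple success.} I claim
\[
 \Sopt(\rho)=a+(1-a)\Sopt(\sigma).
\]
For the lower bound, take an optimal POVM for the $\sigma_i$. Then $\sum_i\Tr(\rho_iM_i)=\Tr\bigl(B\sum_iM_i\bigr)+(1-a)\sum_i\Tr(\sigma_iM_i)=a+(1-a)\Sopt(\sigma)$.

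For the upper bound, use the dual program in \eqref{eq:quantum-SDP}. If $Y\ge\sigma_i$ for all $i$, then $Y'=B+(1-a)Y$ satisfies $Y'\ge\rho_i$, and $\Tr Y'=a+(1-a)\Tr Y$. Minimizing over $Y$ gives $\Sopt(\rho)\le a+(1-a)\Sopt(\sigma)$. Equivalently, the primal bound is already tight for every POVM, because $\sum_i\Tr(BM_i)=a$ holds identically.

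The same argument applied to the binary Helstrom problems would also confirm the pairwise formula. Subtracting the two identities gives
\[
 \beta(\rho)=(1-a)T(\sigma)-a-(1-a)\Sopt(\sigma),
\]
which is \eqref{eq:common-mass}.

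\emph{Monotonicity.} Rearranging \eqref{eq:common-mass},
\[
 \beta(\sigma)-\beta(\rho)=\frac{\beta(\rho)+a}{1-a}-\beta(\rho)=\frac{a\bigl(\beta(\rho)+1\bigr)}{1-a}.
\]
So it suffices to know $\beta(\rho)\ge-1$, with strict increase exactly when $a>0$ and $\beta(\rho)>-1$.

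The bound $\beta\ge-1$ follows from the universal one-point gluing inequality, Proposition~\ref{prop:one-point}. In error coordinates it gives $t\ge x+y$, and together with $z\le1$ from \eqref{eq:normalization} this yields $\beta=t-x-y-z\ge-z\ge-1$. Equivalently, the facets of $\Greg_3$ in Theorem~\ref{thm:G3} give this bound directly.

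\emph{Obstacle.} There is no real difficulty. The only point needing care is the upper bound on $\Sopt(\rho)$, which the dual certificate $Y'=B+(1-a)Y$ handles cleanly. One should also note that the $\sigma_i$ are genuine density matrices: they are positive because $B\le\rho_i$, and have unit trace because $a<1$.
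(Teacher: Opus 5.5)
Your proposal is correct and follows essentially the same route as the paper: it scales the trace distances by $1-a$, uses the identity $\sum_i\Tr(\rho_iM_i)=a+(1-a)\sum_i\Tr(\sigma_iM_i)$ for every POVM, and obtains monotonicity from $\beta\ge-1$. Your dual certificate $Y'=B+(1-a)Y$ is a valid but redundant check, and your derivation of $\beta\ge-1$ from one-point gluing and $z\le1$ spells out what the paper cites as a property of $\Greg_3$.
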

\begin{proof}
Pairwise differences scale by $1-a$, so $T(\rho)=(1-a)T(\sigma)$.
For every POVM,
\[
 \sum_i\Tr(\rho_iM_i)=\Tr B+(1-a)\sum_i\Tr(\sigma_iM_i),
\]
hence $\Sopt(\rho)=a+(1-a)\Sopt(\sigma)$. This proves
\eqref{eq:common-mass}. Since $\beta\ge-1$ throughout $\Greg_3$,
\[
 \beta(\sigma)-\beta(\rho)
 =\frac{a}{1-a}\bigl(\beta(\rho)+1\bigr)\ge0.
\]
\end{proof}

The set $\{B:0\le B\le\rho_i\text{ for all }i\}$ is compact. Choose $B$
with maximal trace. Unless the original states are identical, $a<1$, and the
residual states have
\begin{equation}\label{eq:empty-common-support}
 \operatorname{im}\sigma_1\cap\operatorname{im}\sigma_2
 \cap\operatorname{im}\sigma_3=\{0\}.
\end{equation}
Indeed, a nonzero vector in this intersection would allow a sufficiently small
positive multiple of its rank-one projector to be subtracted from every
residual state, contradicting maximality. Thus a search for a value above the
trine bound may be restricted to triples satisfying
\eqref{eq:empty-common-support}. This is a support reduction, not a reduction
to pure states.

\subsection{Mixed triples attaining the trine value}

For a fixed ancillary state $\omega$, discrimination of
$\omega\otimes\rho_i$ has exactly the same optimum as discrimination of
$\rho_i$: the ancilla can be ignored, and every joint POVM induces a POVM on
the original system by taking its expectation in $\omega$.
Also $\|\omega\otimes(\rho_i-\rho_j)\|_1=\|\rho_i-\rho_j\|_1$.
Hence
\begin{equation}\label{eq:ancilla}
 \beta(\omega\otimes\rho_1,\omega\otimes\rho_2,\omega\otimes\rho_3)
 =\beta(\rho_1,\rho_2,\rho_3).
\end{equation}
Taking a mixed $\omega$ and a trine produces mixed triples attaining $\btr$.
Their common support intersection is still zero.

More generally, for block-diagonal states
$\rho_i=\bigoplus_k p_k\rho_i^{(k)}$ with weights independent of $i$, both
trace norms and optimal successes add across blocks. Therefore
\begin{equation}\label{eq:block-beta}
 \beta(\rho_1,\rho_2,\rho_3)
 =\sum_kp_k\beta(\rho_1^{(k)},\rho_2^{(k)},\rho_3^{(k)}).
\end{equation}
In particular, arbitrary flagged mixtures of trine triples attain the same
value. These constructions explain why pure-state optimality alone does not
settle the mixed-state conjecture, and why a classification of all equality
cases would have to include higher-rank examples.

\end{document}